\documentclass[11pt]{article}
\pagestyle{plain}
\usepackage[margin=1in]{geometry}
\usepackage{subfigure}
\usepackage{lineno}

\usepackage{amssymb}
\usepackage{amsmath}
\usepackage{color}
\usepackage[textsize=footnotesize]{todonotes}
\graphicspath{{figures/}} 

\newtheorem{theorem}{Theorem}
\newtheorem{lemma}{Lemma}

\newtheorem{corollary}{Corollary}

\newcommand{\remove}[1]{}

\newenvironment{proof}{{\em Proof.}}{\hspace*{\fill}$\Box$\vspace{2mm}}

\newcommand{\seg}[1]{\overline{#1}}

\def\comment#1{}


\def\ITEMMACRO #1 ??? #2 ???{\par\vskip2pt\noindent%
\hangindent=#2em\setbox0\hbox{#1\kern4pt}%
\ifdim\wd0<\hangindent\setbox0\hbox to\hangindent{\hss#1\kern.66em}\fi%
\box0\ignorespaces}


\newcommand{\T}{\mathcal{T}}

\begin{document}

\title{How many vertex locations can be\\ arbitrarily chosen when drawing planar graphs?}

\author{
    Emilio Di Giacomo\footnote{University of Perugia, Italy}
    \and
    Giuseppe Liotta\footnotemark[1]
    \and
    Tamara Mchedlidze\footnote{Karlsruhe Institute of Technology (KIT), Germany}
}

\date{}

\maketitle

\begin{abstract}
It is proven that every set $S$ of distinct points in the plane with
cardinality $\left \lceil \frac{\sqrt{\log_2 n}-1}{4}
\right \rceil$ can be a subset of the vertices of a crossing-free
straight-line drawing of any planar graph with $n$ vertices. It is
also proven that if $S$ is restricted to be a one-sided convex point
set, its cardinality increases to $\left \lceil \sqrt[3]{n} \right \rceil$. The
proofs are constructive and give rise to
$O(n)$-time drawing
algorithms. As a part of our proofs, we show that every maximal
planar graph contains a large induced biconnected outerplanar graphs
and a large induced outerpath (an outerplanar graph whose weak dual
is a path).

%

\end{abstract}

\thispagestyle{empty}
\newpage
\pagenumbering{arabic}

\section{Introduction and Overview}\label{se:introduction}

Computing drawings of planar graphs is an extensively studied
subject. In its most classical setting, the problem is to design an
algorithm that receives as input  a planar graph $G$ and it produces
as output a drawing of $G$ such that no two edges of the drawing
cross. Aesthetic requirements, such as minimizing the number of
bends along the edges, using a small area, or representing the faces
as convex or star-shaped polygons can be specified as additional
 optimization goals.
In addition to these geometric optimization goals, a graph drawing
algorithm can receive as input a set of semantic constraints which
define placement and/or routing requirements for some of the
vertices and/or some of the edges. For example, a subset of the
vertices must be collinear, some of the edges must be horizontal,
the size or shape of some edges or vertices is defined in advance
(see, e.g.,~\cite{dett-gd-99,kw-dg-01,s-gda-02}).

A well studied graph drawing problem with semantic constraints is
the  point set emebddability problem. Given a planar graph $G$ with
$n$ vertices and a set $S$ of $n$ distinct points in the plane, the
problem is to compute a crossing-free drawing of $G$ such that its
vertices are mapped to the points of $S$. Kaufmann and
Wiese~\cite{kw-evpfb-02}  proved that any planar graph has a
point-set embedding with at most two bends per edge on any given
point set. The drawing may however use exponential area, and a
polynomial area construction is showed
in~\cite{DBLP:conf/walcom/GiacomoL10}. If no bends along the edges
are allowed, Cabello~\cite{JGAA-132} proved that deciding whether a
planar graph admits a straight-line point-set embedding is an
NP-complete problem. Nishat et
al.~\cite{DBLP:journals/comgeo/NishatMR12} and Moosa and
Rahman~\cite{DBLP:journals/dmaa/MoosaR12} presented polynomial-time
algorithms to test straight-line point-set embeddabiliity of plane
3-trees. At last year SoCG, Biedl and
Vatshelle~\cite{DBLP:conf/compgeom/BiedlV12} showed that the problem
can be solved in polynomial time for planar graphs with a fixed
combinatorial embedding that have constant treewidth and constant
face-degree. They also show that if one of the conditions is dropped
(i.e., either the treewidth is unbounded or some faces have large
degrees), the problem becomes NP-hard. Gritzman, Mohar, Pach and
Pollack~\cite{gmpp-eptvs-91} and independently Casta\~neda and
Urrutia~\cite{cu-sepgps-96} proved that the outerplanar graphs with
$n$ vertices are the largest family of graphs admitting a
straight-line point-set embedding on any set of $n$ points in
general position.

Motivated by the above complexity results and restrictions on the
classes of planar graphs that admit a geometric point-set embedding,
we study the problem of drawing a planar graph with specified vertex
locations from a different angle. Namely, we consider a scenario in
which the given locations are fewer than the vertices of the graph.
In our setting the algorithm receives as input any planar graph $G$
with $n$ vertices and any ``sufficiently small'' set $S$ of points.
The output is a crossing-free straight-line drawing such that $|S|$
vertices of $G$ are mapped to the points of $S$, while the points
representing the remaining $n-|S|$ vertices of $G$ are chosen by the
algorithm. We call the output a {\em geometric point-subset
embedding of $G$ on $S$}. We stress that $S$ can be any set of
points, i.e. the drawing algorithm must work for all point sets of
``sufficiently small'' cardinality. We are interested in making the
``sufficiently small'' cardinality as large as possible. For
example, it is immediate to see that all planar graphs with at least
four vertices have a geometric point-subset embedding of $G$ on all
point sets of cardinality three. Less intuitively, it has been
proven in~\cite{DBLP:conf/isaac/Angelini12} that all planar graphs
with at least six vertices have a geometric point-subset embedding
of $G$ on all point sets of cardinality four. The following
significantly improves this cardinality.

\begin{theorem}\label{th:general}
Let $G$ be an $n$-vertex plane graph with vertices and let $S$ be a
set of distinct points in the plane such that $|S| \leq
\left \lceil \frac{\sqrt{\log_2 n}-1}{4} \right \rceil$. $G$ has a geometric point-subset embedding on $S$, which can be computed $O(n)$ time.
\end{theorem}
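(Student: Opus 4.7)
The plan is to reduce the theorem to an extremal structural result on maximal planar graphs, combined with a point-set embedding technique for outerpaths. First, I would augment $G$ to a maximal planar graph $G'$ on the same vertex set, preserving the given combinatorial embedding; this is a standard $O(n)$-time operation. Then I invoke the extremal lemma promised in the abstract: every $n$-vertex maximal planar graph contains a large induced outerpath. Let $H$ be such an induced outerpath in $G'$; the cardinality bound $\left\lceil \frac{\sqrt{\log_2 n}-1}{4} \right\rceil$ should match precisely the lower bound on $|V(H)|$ guaranteed by that lemma, so that $|V(H)| \geq |S|$.

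Next I would point-set embed $H$ on $S$. An outerpath has very rigid structure: its weak dual is a path of triangular faces, so its boundary admits a canonical traversal in which every inner face sits on consecutive vertices. The algorithm sorts the points of $S$ along some direction and maps the outerpath vertices onto $S$ in the corresponding order, taking care that no triangular inner face is placed on three collinear points of $S$. The resulting straight-line drawing of $H$ is crossing-free and each interior face has non-empty interior, leaving room inside each face for the rest of $G'$.

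The next step is to extend this drawing of $H$ to a crossing-free straight-line drawing of $G'$. Since $H$ is induced in $G'$, each connected component of $G' \setminus V(H)$, together with its neighbors on the boundary of $H$, forms a near-triangulation whose outer cycle coincides with one inner face of $H$. Inside each triangular face of the drawing of $H$, I apply a Fáry-style triangulation-in-a-triangle routine (using a canonical ordering) to place the remaining vertices without introducing crossings. Finally I delete from the drawing the edges of $G' \setminus G$ to obtain the desired geometric point-subset embedding of $G$ on $S$. Since every substep is linear, the total running time stays in $O(n)$.

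The main obstacle I expect is the extremal lemma itself: proving that every maximal planar graph on $n$ vertices contains an induced outerpath of size $\Omega(\sqrt{\log n})$. This is a Ramsey-type statement that the abstract flags as a contribution of the paper, and the $\sqrt{\log n}$ threshold strongly suggests an iterative halving argument, possibly alternating between two color classes of a vertex partition. A secondary concern is placing $H$ on $S$ when $S$ contains many collinear points: this constraint is presumably what limits the bound to $\sqrt{\log n}$ in the general case, whereas the one-sided convex hypothesis of the other theorem avoids degeneracies and allows the much larger bound $\lceil \sqrt[3]{n} \rceil$.
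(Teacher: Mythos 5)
Your high-level skeleton (augment to a maximal planar graph, extract a large induced outerpath, embed it on $S$, extend inside the faces, delete the dummy edges) matches the paper's Steps~1--5, but the central drawing step is wrong as you describe it, and this is exactly where the paper's main idea lives. You propose to sort $S$ and map \emph{all} vertices of the outerpath $H$ onto $S$, ``taking care that no triangular inner face is placed on three collinear points.'' This cannot work: $S$ is an arbitrary point set, so it may consist entirely of collinear points, in which case no triangle --- and no crossing-free placement of the chords joining the two sides of the outerpath --- can be realized on $S$ at all. The paper avoids this with a finer structural guarantee: the outerpath comes with a distinguished \emph{side path} $P$ of size at least $|S|$, and $P$ is an \emph{induced path} of the graph. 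Only $P$ is mapped onto $S$ (in left-to-right order, after a rotation making the $x$-coordinates distinct); since $P$ has no chords, this is planar for \emph{every} point set. The opposite side path $P'$ and the remaining outer vertex $w$ are then placed at locations \emph{chosen by the algorithm}, far below (resp.\ above) $S$ with slopes steeper than any segment $\seg{p_ip_{i+1}}$, so that all chords of $H$ can be drawn without crossings and every face of the partial drawing is a star-shaped polygon; the interior is then filled using Hong--Nagamochi's extension theorem for star-shaped outer boundaries. Your ``F\'ary-style routine in a triangle'' does not suffice here, because these faces are star-shaped polygons with more than three corners. Your closing speculation that collinearity of $S$ is what limits the bound to $\sqrt{\log n}$ is therefore backwards: degeneracies of $S$ are handled for free by the induced-side-path trick, and the bound comes entirely from the extremal combinatorics.

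Two further gaps. First, the extremal theorem the paper actually proves does not put the outerpath in $G$ itself: it guarantees a maximal plane \emph{subgraph} $G'$ of $G$ (a pertinent graph of the $4$-block tree, arising when $G$ has few nested separating triangles but a large $4$-connected piece) containing an \emph{external} outerpath with a long side path. When $G' \neq G$, one must also draw the part of $G$ outside the separating triangle bounding $G'$; the paper does this with a second drawing lemma that routes the three induced Schnyder paths $P_0(u)$, $P_1(u)$, $P_2(u)$ from the outer triangle of $G$ to that separating triangle along the angle bisectors of the already-drawn triangle. Your proposal has no counterpart to this step. Second, your guess at the proof of the extremal lemma (an iterative halving or two-coloring argument) is not how the $\sqrt{\log_2 n}$ threshold arises: the paper's argument is a dichotomy on the $4$-block tree --- either there are $\Omega(\sqrt{\log n})$ nested separating triangles, which force a long induced Schnyder path, or some $4$-connected piece has $2^{\Omega(\sqrt{\log n})}$ vertices, and Thomassen's rectangular representation of its dual combined with T\'oth's stabbing lemma yields an induced path logarithmic in the size of that piece; balancing the two cases gives the stated bound.
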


We also prove that if the points are in one-sided convex position
(i.e. they form a convex point set such that the leftmost and the
rightmost points are adjacent in the convex hull), the bound on the
cardinality in Theorem~\ref{th:general} can be further increased.


\begin{theorem}\label{th:convex}
Let $G$ be an $n$-vertex plane graph and let $S$  be a
one-sided convex point set in the plane such that $|S| \leq \left \lceil
\sqrt[3]{n} \right \rceil$.
$G$ has a geometric point-subset embedding
on $S$, which can be computed $O(n)$ time.
\end{theorem}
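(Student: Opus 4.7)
The plan is to reduce Theorem~\ref{th:convex} to an induced-subgraph extraction lemma combined with a geometric construction tailored to one-sided convex point sets. First, I would augment $G$ to a maximal plane graph $G^+$ by triangulating every face (this keeps the vertex set unchanged). Then, invoking the combinatorial result announced in the abstract, extract an induced outerpath $H$ in $G^+$ with $|V(H)| \geq \lceil \sqrt[3]{n} \rceil \geq |S|$. Since $H$ is $2$-connected outerplanar, its outer face is bounded by a Hamiltonian cycle, and because its weak dual is a path, its internal triangular faces can be listed in linear order $t_1, t_2, \ldots, t_k$.

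Next, I would place $H$ geometrically. Order the points of $S$ as $s_1, s_2, \ldots, s_{|S|}$ along the convex arc so that the chord $\overline{s_1 s_{|S|}}$ lies on the convex hull (possible exactly because $S$ is one-sided convex). Pick $|S|$ vertices $v_1, \ldots, v_{|S|}$ of $H$ that appear consecutively on the outer cycle of $H$, map $v_i \mapsto s_i$, and place the remaining vertices of $H$, in the order in which they continue along the outer cycle, on a convex polyline on the opposite side of the line through $s_1$ and $s_{|S|}$ from the convex arc of $S$. This produces a convex polygon whose vertex sequence is exactly the outer cycle of $H$; by the classical result of Gritzmann, Mohar, Pach and Pollack, any outerplanar graph (and in particular $H$) admits a straight-line crossing-free drawing with its outer cycle prescribed as a convex polygon, so $H$ can be drawn with $v_i$ at $s_i$.

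Finally, I would extend the drawing to the remaining vertices of $G$. Because $H$ is induced in $G^+$, every vertex of $V(G) \setminus V(H)$ lies in a unique face of $H$ inside the planar embedding of $G^+$, and the subgraph of $G^+$ lying inside each such face is a plane triangulation with a specified outer boundary (either a triangle $t_i$ or the outer cycle of $H$). Each face of $H$ corresponds in our drawing to a polygonal region; for the bounded ones, and for the outer face (which can be truncated by a sufficiently large enclosing triangle), apply any linear-time straight-line triangulation-drawing algorithm (for instance a canonical-ordering shift method) to realize the respective sub-triangulation inside the region. Removing the edges added during the initial augmentation yields the required point-subset embedding of $G$. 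The main obstacle is the combinatorial lemma itself: proving that every maximal planar graph on $n$ vertices contains an induced outerpath of size $\Omega(\sqrt[3]{n})$. Once that lemma is available, the geometric realization is essentially forced by the fact that the linear ``strip'' topology of an outerpath matches the linear order along the convex arc of $S$, and the $O(n)$ running time follows because each step visits every vertex only a constant number of times.
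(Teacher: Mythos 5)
Your overall strategy (augment to a triangulation, extract a large induced outerplanar subgraph, realize its outer cycle in convex position, then complete the drawing inside the faces) matches the paper's, but there are two genuine gaps. First, you invoke an induced \emph{outerpath} of size $\lceil \sqrt[3]{n}\rceil$; no such lemma is available. The paper's $\lceil\sqrt[3]{n}\rceil$ bound (Theorem~\ref{th:large-induced-outerplanar}) is for an induced \emph{biconnected outerplanar} subgraph, while for outerpaths it only obtains size $\Theta(\sqrt{\log n})$ (Theorem~\ref{th:large-induced-outerpath}), which is what feeds Theorem~\ref{th:general}, not Theorem~\ref{th:convex}. Fortunately your geometric step never actually uses the weak-dual-is-a-path property --- it only uses the Hamiltonian outer cycle and the fact that all remaining edges of $H$ are chords drawable inside the convex polygon --- so the requirement should be weakened to ``induced biconnected outerplanar subgraph,'' which the paper does supply. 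As written, the ``main obstacle'' you identify is strictly stronger than what is proved and is left open.

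Second, and more seriously, you ignore where $H$ sits relative to the external boundary of $G$. Theorem~\ref{th:large-induced-outerplanar} guarantees that $H$ has an edge on the external boundary of $G$, and Lemma~\ref{le:embedding.2} uses this crucially: that edge is mapped to the hull edge $\overline{p_1p_k}$, the unique remaining external vertex $w$ is placed explicitly beyond the convex arc, and only then is every undrawn vertex of $G$ enclosed in a bounded star-shaped face of the partial drawing, so that Theorem~\ref{theorem:HongNagamochi} (prescribed star-shaped \emph{outer} boundary) can finish the job. In your version the portion of $G$ lying in the outer face of $H$ would have to be drawn with the convex polygon of $H$ prescribed as an \emph{internal} face; ``truncating by a large enclosing triangle'' does not reduce this to a prescribed-outer-boundary problem, because the coordinates on the inner boundary are already fixed. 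Two smaller issues: the internal faces of an induced outerplanar subgraph of a triangulation need not be triangles (they can be longer cycles with further vertices of $G$ inside them), and a canonical-ordering shift algorithm does not place a triangulation inside an arbitrarily prescribed polygon --- you need a Tutte/Hong--Nagamochi-type theorem, together with the observations that each face of the convex drawing of $H$ is star-shaped and that the graph inside it is $3$-connected precisely because $H$ is induced (no chords between non-consecutive boundary vertices).
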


We recall that in ~\cite{DBLP:conf/isaac/Angelini12} it is proven
that there exists a one-sided convex point set $S$ of size $\left
\lceil \sqrt{n} \right \rceil$ such that any planar graphs admits a
geometric point-subset embedding on $S$. Theorem~\ref{th:convex} extends this
result to all one-sided convex point sets, even if with a slightly
smaller cardinality.

\paragraph{Overview of the proof technique.}
The proofs of Theorems~\ref{th:general} and~\ref{th:convex} follow a
similar approach, although they use different combinatorial results.
The general idea is described in the following steps. Let $G$ be an
$n$-vertex plane graph (i.e. a planar graph with a given planar
embedding) and let $S$ be a set of $k$ points with $k < n$.

\begin{description}

\item[Step~1:]Graph $G$ is augmented to a maximal plane graph $G'$ with $n$ vertices.

\item[Step~2:] An induced subgraph $H \subset G'$ of size at least $k$ is
computed.

\item[Step~3:] A point subset embedding $\Gamma_H$ of $H$ on $S$ is computed.

\item[Step~4:] A planar embedding preserving straight-line drawing of $G'$ is
constructed by adding vertices and edges to $\Gamma_H$.

\item[Step~5:]  The edges of $G' - G$ are removed.

\end{description}

Concerning Step~2, the induced subgraph $H$ is outerplanar in the
proofs of both theorems. In the proof of Theorem~\ref{th:convex},
$H$ is a biconnected outerplanar graph having at least $\left \lceil
\sqrt[3]{n} \right \rceil$ vertices; in the proof of
Theorem~\ref{th:general}, $H$ is an outerpath (i.e. a biconnected
outerplanar graph whose weak dual is a path) having at least $\left
\lceil \frac{\sqrt{\log_2 n}-1}{4} \right \rceil$ vertices. The
existence of these subgraphs will be proven in
Theorems~\ref{th:large-induced-outerpath} and
\ref{th:large-induced-outerplanar}, where it is also showed how to
compute them in $O(n)$ time. The arguments combine a variety of
concepts, including Schnyder's realizers~\cite{Schnyder90},
rectangular representations~\cite{Thomassen1986}, stabbing
techniques~\cite{Toth08}, 4-block trees~\cite{jgaa/WangHe12}, and
Mirsky's theorem~\cite{M-71}. As for Steps~3 and~4, they rely on
different drawing techniques concerning point-set embeddings and
straight-line drawings of plane graphs inside star-shaped
polygons~\cite{HongN08}.

We conclude this introduction by remarking that
Theorems~\ref{th:large-induced-outerpath} and
\ref{th:large-induced-outerplanar} may be of independent interest.
Namely, proving the existence of large induced subgraphs has a long
tradition in combinatorial graph theory. Erd{\"o}s, Saks, and
S{\'o}s~\cite{ErdosSS89} asked what is the maximum size of an
induced tree in a graph  and studied the relation between such size
and other natural graph parameters. For example, they give upper and
lower bounds on large induced trees in $K_r$-free graphs; these
bounds have been improved by Matou\v{s}ek and
\v{S}\'{a}mal~\cite{MatousekS07} and by Fox, Loh, and
Sudakov~\cite{FoxLS09}. Restricting to paths, Aroha and
Valencia~\cite{ArohaV00} showed that every 3-connected planar graph
with a large number of vertices has a long induced path. More
specifically, they proved that if a $3$-connected graph $G$ has a
vertex of degree $d$, then $G$ contains an induced path of length at
least $\sqrt{\log_3{d}}$. Generalizing this result to non-planar
graphs, B\"{o}hme et al.~\cite{BohmeMSS04} proved that for every
positive integers $k,~r$ and $s$ there exists an integer
$n=n(k,r,s)$ such that every $k$-connected graph of order at least
$n$ contains either an induced path of length $s$ or a subdivision
of the complete bipartite graph $K_{k,r}$. However, their
construction, do not seem to yield neither an explicit function for
the length of the path nor an algorithm to compute it. An
implication of our results is that every $n$ vertex maximal planar
graph has an induced path of length at least $\left \lceil
\frac{\sqrt{\log_2 n}-1}{4} \right \rceil$, which can be computed in
$O(n)$ time.


The rest of the paper is organized as follows. In
Section~\ref{se:preliminaries} we give preliminary definitions. The
techniques behind the proofs of Theorem~\ref{th:general}
and~\ref{th:convex} are described in Sections~\ref{se:general}
and~\ref{se:convex}, respectively. Finally, open problems can be
found in Section~\ref{se:conclusions}.

\section{Preliminaries}\label{se:preliminaries}

Let $G$ be a graph. A \emph{drawing} $\Gamma$ of $G$ is a mapping of
each vertex $v$ of $G$ to a point $p_v$ in the plane, and of each
edge $e=(u,v)$ to a Jordan arc connecting $p_u$ and $p_v$ not
passing through any other vertex. $\Gamma$ is \emph{straight-line}
if its edges are drawn as straight-line segments.  $\Gamma$ is
\emph{planar} if no two edges intersects (except possibly at common
endpoints). A planar drawing $\Gamma$ partitions the plane into
topologically connected regions called \emph{faces}; the unbounded
region is the \emph{external} face and the other regions are the
\emph{internal} faces. The boundary of a face is its delimiting
circuit. All the face boundaries of a biconnected graph are simple
circuits. A planar drawing determines a circular ordering of the
edges around each vertex. The cyclic ordering of the edges around each vertex of $\Gamma$
together with a choice of the external face is a \emph{planar
embedding} of $G$. A \emph{plane graph} is a graph with a fixed
planar embedding. The boundary of the external face of a plane graph $G$ will be also called the \emph{external boundary} of $G$. The \emph{dual graph} $G^\star$ of a given plane
graph $G$ is a plane graph that has a vertex corresponding to each
face of $G$, and an edge joining two vertices corresponding to
neighboring faces of $G$. The \emph{weak dual} of $G$ is the graph
obtained from $G^{\star}$ after removing the vertex that corresponds
to the external face of $G$. A $3$-cycle of a graph $G$ is also called a \emph{triangle} of $G$. A planar (plane) graph is
\emph{maximal} if each face of the graph is a triangle, thus no edge
can be added to it without violating planarity. A subgraph $H$ of a
graph $G$ is said to be \emph{induced} if, for any pair of vertices
$u$ and $v$ of $H$, $(u,v)$ is an edge of $H$ if and only if $(u,v)$
is an edge of $G$.

Let $G$ be a graph. A \emph{separating $k$-ple} of $G$ is a set of
$k$ vertices whose removal disconnects $G$.  A graph $G$ is
\emph{$k$-connected} if it does not contain a separating
$(k-1)$-ple. Every maximal planar graph is $3$-connected and every separating $3$-ple of $G$ (if any) is a triangle of $G$. Let $t_1$ and $t_2$ be two separating triangles of a maximal plane graph $G$; we say that $t_2$ is \emph{nested} inside $t_1$ and write $t_1 \succ t_2$ if at least one vertex of $t_2$ is inside $t_1$ in the planar embedding of $G$.

A graph is \emph{outerplanar} if it admits a planar embedding where
all vertices are on the external boundary. An \emph{outerpath} is an
outerplanar graph whose weak dual is a path. Let $H$ be an outerpath
and let $f_1$ and $f_2$ be the two faces corresponding to the
endpoints of $H^{\star}$. The edges of $f_1$ and $f_2$ that belong
to the external boundary of $H$ are called the \emph{extremal
edges} of $H$. Let $e_1$ be an extremal edge on the boundary of $f_1$, let $e_2$ be an extremal edge on the boundary of $f_2$. Any path in the external boundary of $H$ that connects an endvertex of $e_1$ with an endvertex of $e_2$ is called a \emph{side path} of $H$. It is easy to see that two extremal edges $e_1$ and $e_2$ define two side paths $\pi_1$ and $\pi_2$ and that every edge of $H$ that is not on the external boundary has an endvertex in $\pi_1$ and an endvertex in $\pi_2$. In other words both $\pi_1$ and $\pi_2$ are induced paths. Let $e$ be an extremal edge of $H$; any side path starting at an endvertex of $e$ is also called a \emph{side path of $H$ with respect to $e$}. Let $G$ be a maximal plane graph that contains an induced outerpath subgraph $H$ with an extremal edge $e$ on the external boundary of $G$ and such that there is a side path of $H$ with respect to $e$ that has $k$ vertices. We say that $G$ \emph{contains an external outerpath $H$ with a side path of size $k$}.


Let $S$ be a set of $k$ points on the plane, $k >0$ and let $G$ be a
planar graph with $n$ vertices, with $k \leq n$. A
\emph{geometric point-subset embedding} of $G$ on $S$ is a planar
straight-line drawing $\Gamma$ of $G$ such that each point of $S$ represents
a vertex of $G$. If $G$ is a plane graph and the planar embedding of $G$ defined by $\Gamma$ is the same as the one of $G$, we say that $\Gamma$ \emph{preserves} the planar embedding of $G$. All the algorithms of this paper compute drawings that preserve the planar embedding of the input graph. So when we write  ``geometric point-subset embedding'' we mean ``geometric point-subset embedding that preserves the planar embedding''.

\section{General Point Sets}\label{se:general}

In this section we prove Theorem~\ref{th:general}. The main ingredients of the proof are:
(i) A drawing algorithm that, given a set $S$ of $k$ points and a
maximal plane graph $G$ that contains an external outerpath with a side path of size $k$, computes a geometric point-subset embedding of $G$ on $S$
(Section~\ref{ss:general-embedding}). (ii) An algorithm that,
given a maximal plane graph $G$, computes an external outerpath of $G$ with a side path of size at least $\left \lceil
\frac{\sqrt{\log_2 n}-1}{4} \right \rceil$ (Section~\ref{ss:general-induced}).

\subsection{Drawing Tools}\label{ss:general-embedding}

Let $P$
be a polygon on the plane, a \emph{kernel} $K(P)$ of polygon $P$ is
the set of all points inside $P$ from which all vertices of $P$ are
visible. We say that $P$ is \emph{star-shaped} if $K(P)\neq
\emptyset$.


\begin{theorem}[Hong, Nagamochi~\cite{HongN08}]\label{theorem:HongNagamochi} Let $G$ be an $n$-vertex $3$-connected planar graph and let $C$ be its external boundary. Every planar straight-line drawing of $C$ as a star-shaped polygon can be extended to a planar straight-line drawing of $G$, where each internal face is represented by a convex polygon. Such a drawing can be computed in $O(n)$ time.
\end{theorem}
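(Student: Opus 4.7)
The plan is to extend Tutte's classical barycentric embedding theorem from a convex outer boundary to a star-shaped outer boundary. Tutte's theorem guarantees that if the outer face of a 3-connected planar graph is drawn as a convex polygon, then placing each internal vertex at a convex combination of the positions of its neighbors produces a planar straight-line drawing in which every internal face is convex. The starting observation is that a convex polygon is nothing other than a star-shaped polygon whose kernel coincides with its interior, so it is natural to expect Tutte's argument to survive with the interior replaced by the kernel.

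First I would fix a point $q$ in $K(C)$; such a point exists by assumption. The point $q$ plays the role that the center of a convex polygon plays in Tutte's argument: every vertex of $C$ is visible from $q$ along a segment that lies inside $C$, so every ray emanating from $q$ crosses the boundary of $C$ exactly once and the vertices of $C$ appear in the same cyclic order as they do along $C$. I would then place the internal vertices according to Tutte's barycentric rule and show, adapting Tutte's proof, that the resulting drawing is planar with convex internal faces. The adaptation replaces the global line-sweep argument used by Tutte (which needs convexity of the outer polygon in order to intersect every straight line in at most two points) with an angular sweep of rays centered at $q$, together with the fact that such a ray meets $C$ exactly once.

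To achieve the claimed $O(n)$ running time one must bypass solving the linear system arising from Tutte's rule. I would use a canonical ordering of the 3-connected planar graph in the style of Kant, adding vertices one at a time while maintaining the invariant that the current partial boundary is star-shaped with respect to $q$ and that every already-drawn internal face is convex. Each newly added vertex is placed inside the intersection of the visibility cones determined by its already drawn neighbors and by $q$, so that the invariant is preserved; with appropriate data structures (incremental convex-hull and cone-intersection structures) this can be carried out in amortized $O(1)$ time per vertex.

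The main obstacle will be the rigorous proof that the internal faces remain convex when the outer face is only star-shaped. At a vertex of $C$ lying on a reflex portion of the outer polygon the interior angle exceeds $\pi$, so a direct copy of Tutte's convexity argument fails there and one has to rule out that an adjacent internal face bulges outward across the boundary. I expect the right way to handle this is to argue locally at each boundary vertex via the visibility of every boundary vertex from $q$, using the kernel point to certify that all internal faces lie entirely on the "$q$-side" of the tangent lines at reflex vertices; an alternative is to augment $C$ internally by a convex polygon contained in $K(C)$ and triangulate the resulting annulus, reducing to the classical convex-boundary case.
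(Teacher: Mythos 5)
This theorem is not proved in the paper at all: it is imported verbatim from Hong and Nagamochi~\cite{HongN08}, whose argument is not a Tutte-type barycentric one but a direct recursive geometric construction (the graph is repeatedly split along internal paths into subgraphs whose outer boundaries are again realized as star-shaped polygons, yielding a linear-time algorithm). So your attempt has to stand on its own, and it does not.

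The genuine gap is your central step: you assert that Tutte's barycentric placement still produces a planar drawing with convex internal faces when the outer polygon is only star-shaped, and that the proof survives by replacing the line-sweep with an angular sweep of rays from a kernel point $q$. This is precisely the hard part, and it is false as stated: Tutte's planarity proof hinges on the lemma that no interior vertex is extremal in any direction, and closing that argument requires the outer boundary to be convex (every supporting half-plane contains the whole drawing). With a reflex vertex on $C$ the barycentric solution can fold faces across the boundary, and non-convex --- even star-shaped --- boundaries are known to admit non-injective Tutte embeddings; the fact that every ray from $q$ meets $C$ once does not substitute for the half-plane argument. Your final paragraph concedes this is ``the main obstacle'' but offers only two unexecuted gestures: a local visibility argument at reflex vertices that is never carried out, and an augmentation of $C$ by a convex polygon inside $K(C)$ --- which does not reduce to the convex case, since the annulus between $C$ and that polygon is not a union of faces of $G$, so triangulating it changes the graph and says nothing about the convexity of the original faces at the reflex vertices. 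Likewise, the $O(n)$-time incremental scheme presupposes that the region available for each newly added vertex (the intersection of the relevant visibility cones) is always nonempty; establishing that invariant is essentially equivalent to proving the theorem and is nowhere argued. In short, you have correctly located the difficulty but not resolved it.
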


\begin{lemma}\label{le:embedding.1.a}
Let $S$ be a set $k$ distinct points in the plane and let $G$ be an
$n$-vertex maximal plane graph that contains an external outerpath with a side path of size at least $k$. $G$ admits a geometric point-subset embedding on $S$, which can be computed in $O(k \log k + n)$ time.
\end{lemma}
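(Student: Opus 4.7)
The plan is to realize the general Step~3--4 scheme outlined in Section~\ref{se:introduction}: embed the side path $\pi$ of $H$ onto $S$, extend this to a planar straight-line drawing of the entire outerpath $H$ whose internal faces are star-shaped polygons, and then invoke Theorem~\ref{theorem:HongNagamochi} to fill each such face with the corresponding subgraph of $G$.

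First, I would sort $S$ by $x$-coordinate (breaking ties by $y$) in $O(k\log k)$ time to obtain $p_1,\ldots,p_k$. Let $\pi=v_1v_2\cdots v_k$ be the given side path, traversed so that $v_1$ is an endvertex of the extremal edge $e\subseteq\partial G$, and set $v_i\mapsto p_i$; the image of $\pi$ is then an $x$-monotone polyline, so no two of its edges cross. Let $\pi'=u_1u_2\cdots u_m$ be the opposite side path. Because the weak dual of $H$ is a path, each $u_j$ is adjacent in $H$ to a contiguous block of vertices of $\pi$, and these blocks shift monotonically along $\pi$ as $j$ grows. I would place the $u_j$'s on a horizontal line $y=-M$ in their path order, choosing the $x$-coordinate of $u_j$ inside the $x$-span of its adjacency block. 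For $M$ sufficiently large (depending on the diameter of $S$), each internal face of $H$ --- a thin polygon with an upper chain on $\pi$, a lower chain on $\pi'$, and two near-vertical connecting edges --- becomes star-shaped, with a kernel located well below the upper chain.

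For the second stage, fix an internal face $f$ of $H$ and let $G_f$ be the subgraph of $G$ induced by the vertices on or inside $\partial f$, together with the edges of $G$ lying in that region. Because $H$ is an induced subgraph of $G$, the cycle $\partial f$ has no chord in $G$ (such a chord would itself belong to $H$ and split $f$), so $G_f$ is a 3-connected near-triangulation with outer cycle $\partial f$. Since $\partial f$ has been drawn as a star-shaped polygon, Theorem~\ref{theorem:HongNagamochi} yields a planar straight-line drawing of $G_f$ inside this polygon in $O(|G_f|)$ time. The external face of $H$ is treated analogously, with the hypothesis $e\subseteq\partial G$ providing the degree of freedom needed to place the remaining subgraph of $G$ outside the boundary polygon of $H$ and to re-root its planar embedding so that Theorem~\ref{theorem:HongNagamochi} still applies. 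Summing $O(|G_f|)$ over all faces with the $O(k\log k)$ sorting and the $O(n)$ bookkeeping yields the claimed $O(k\log k+n)$ total time.

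The main obstacle I expect is the scaffold step: I must argue, from the combinatorial structure of $H$ and the order of the sorted points of $S$ alone, that the $x$-coordinates of $\pi'$ and the depth $-M$ can always be chosen so that \emph{every} internal face of $H$ becomes star-shaped simultaneously and no two edges of $H$ cross. The ``tall and thin'' intuition makes this plausible, but a careful argument --- probably by induction along the weak dual path of $H$ while tracking how the adjacency blocks of successive $u_j$'s evolve --- is needed to certify that the kernels are all nonempty for an arbitrary point set $S$. A secondary subtlety is the treatment of the external face of $H$, which the extremal-edge-on-$\partial G$ hypothesis is designed precisely to enable.
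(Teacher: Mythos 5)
Your overall architecture coincides with the paper's: map the side path to the $x$-sorted points of $S$, push the opposite side path far away so that the connecting chords cannot cross the path edges, place the one remaining outer vertex of $G$, and then fill each star-shaped face with its induced piece of $G$ via Theorem~\ref{theorem:HongNagamochi} (your observation that each such piece is an internally triangulated graph whose outer cycle has no chord, hence $3$-connected, is exactly the paper's argument). The difference is where the proof actually lives, and that is precisely the part you leave open. The paper does not use a horizontal line $y=-M$ with $x$-coordinates chosen inside adjacency blocks; it places all of $\pi'$ on the \emph{vertical} line $x = x_k+1$, strictly to the right of every point of $S$, at points $q_i=(x_k+1,\,-\sigma(x_k+i))$ where $\sigma=(\Delta_y+1)/\delta_x$, with $\delta_x$ the minimum horizontal gap and $\Delta_y$ the maximum vertical spread of $S$. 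Then every path edge has slope in $(-\sigma,\sigma)$, every chord $\seg{p_iq_j}$ has slope at most $-\sigma$, so chords cannot cross path edges; and two chords $\seg{p_{i_1}q_{j_1}}$, $\seg{p_{i_2}q_{j_2}}$ cannot cross because the planar embedding forces $i_1\le i_2 \Rightarrow j_2\le j_1$, which orders their slopes consistently with the vertical order of their right endpoints. This three-line slope computation is the entire content of the ``main obstacle'' you flag; without it (or an equivalent for your horizontal placement) the lemma is not proved.

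Two further concrete problems with your variant. First, placing $u_j$ ``inside the $x$-span of its adjacency block'' is not always well defined or injective: an adjacency block can be a single vertex of $\pi$, and consecutive vertices of $\pi'$ can have nested or identical blocks, forcing coincident or out-of-order placements; the paper sidesteps this entirely by giving all of $\pi'$ the same $x$-coordinate and separating its vertices vertically, relying only on slopes. Second, your treatment of the region of $G$ outside $H$ is too vague: since $G$ is maximal plane and the extremal edge of $H$ lies on its outer triangle, there is exactly one outer vertex $w\notin H$; the paper places it explicitly at $(x_k+2,\sigma(x_k+2))$, checks the new edges are crossing-free, and then \emph{every} remaining vertex of $G$ lies inside some star-shaped face of $H\cup\{w\}$, so no ``re-rooting of the planar embedding'' or separate handling of an external region is needed. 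Your complexity accounting and the use of Hong--Nagamochi are otherwise correct.
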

\begin{proof}
Assume that the points of $S$ have distinct $x$-coordinates. If not, we can rotate the plane so to achieve this condition. Let $p_1,p_2,\dots,p_k$ be the points of $S$ according to their left-to-right order and let $(x_i,y_i)$ be the coordinates of $p_i$ ($i=1,2,\dots,k$). Without loss of generality assume $x_1=0$, $y_1=0$. If this is not the case we can translate the plane to achieve this condition. Let $P$ be the side path of the external outerpath $H$ of $G$ that has size at least $k$. If $P$ has more than $k$ vertices we can augment $S$ with additional points. So without loss of generality assume that $P$ has exactly $k$ vertices. Let $u_1, u_2, \dots, u_k$ be the vertices of $P$ in the order they appear along $P$ with $u_1$ being the vertex of $P$ on the external boundary of $G$. Map $u_i$ to $p_i$ and draw each edge $(u_i,u_{i+1})$ as the segment $\seg{p_ip_{i+1}}$ (see Figure~\ref{fi:drawing1} for an illustration of the drawing technique). All vertices of $H$ that are not in $P$ form a second side path $P'$ of $H$. Since $P'$ is also an induced path, all the edges of $H$ that are not in $P$ nor in $P'$ connect vertices of $P$ to vertices of $P'$. Let $v_1,v_2,\dots,v_h$ be the vertices of $P'$ in the order they appear along $P'$ with $v_h$ being the vertex of $P'$ on the external boundary of $G$. We want to place these vertices below the points of $S$ in such a way that each point of $S$ can be connected to the vertices of $P'$ without crossing the segments $\seg{p_ip_{i+1}}$. Let $\delta_x= \min_i \{|x_{i+1}-x_i|\}$, let $\Delta_y= \max_{i,j} \{|y_i-y_j|\}$, and let $\sigma=(\Delta_y+1)/\delta_x$. Let $\sigma_i$ be the slope of segment $\seg{p_ip_{i+1}}$ ($i=1,2,\dots,k-1$). We have $-\sigma < \sigma_i < \sigma$ for every $i=1,2,\dots,k-1$.  We place vertex $v_i$ at point $q_i=(x_k+1,-\sigma (x_k+i))$. The slope of each segment $\seg{p_iq_j}$ is at most $-\sigma$ and therefore each $\seg{p_iq_j}$ does not cross any segment $\seg{p_ip_{i+1}}$. Also, segments $\seg{p_iq_j}$ do not cross each other. Namely, let $\seg{p_{i_1}q_{j_1}}$ and $\seg{p_{i_2}q_{j_2}}$ be two such segments. Since the left-to-right order of the vertices of $P$ in the drawing is the order in which they appear along $P$ going from the external boundary to the interior, and the bottom-to-top order of the vertices of $P'$ in the drawing is the order in which they appear along $P'$ going from the external boundary to the interior, then $i_1 \leq i_2$ and $j_2 \leq j_1$. This means that the slope of $\seg{p_{i_1}q_{j_1}}$ is less than the slope of $\seg{p_{i_2}q_{j_2}}$; since $q_{j_1}$ is below $q_{j_2}$, the two segments do not cross each other. The drawing computed so far is a geometric point-subset embedding of $H$ on $S$. Also, the drawing preserves the embedding of $H$. Let $w$ be the unique vertex of the external boundary that does not belong to $H$. In order to complete the drawing we place the vertex $w$ at point $(x_k+2,\sigma(x_k+2))$ and draw all the edges connecting it to all its adjacent vertices that have already been drawn. Also in this case, the added edges do not cross the existing ones. So the obtained drawing $\Gamma'$ is a geometric point-subset embedding of the graph $H'$ induced by the vertices in $H$ plus vertex $w$. and also in this case the planar embedding is preserved. It is also easy to see that all the faces of $\Gamma'$ are star-shaped polygons. Notice that, in the planar embedding of $G$ each vertex not yet drawn is inside one of the cycles represented by the faces of $H'$. Let $C$ be one of these cycles and let $G'$ be the graph induced by the vertices of $C$ plus all the vertices that are inside $C$ in the planar embedding of $G$. All the internal faces of $G'$ are triangles, while its external boundary is $C$ (which therefore can have more than three vertices). Notice that there is no edge connecting two non-consecutive vertices of the external boundary of $G'$; namely if such an edge existed it would have been drawn in $\Gamma'$ and therefore $C$ would not have been a cycle of $H'$. An internally triangulated graph is $3$-connected if and only if there is no edge connecting two non-consecutive vertices of its external boundary (see, for example,~\cite{Avis96}). Hence, $G'$ is $3$-connected and its external boundary is drawn as a star-shaped polygon. By Theorem~\ref{theorem:HongNagamochi} the drawing of $C$ can be extended to a planar drawing of $G'$. Since this is true for all cycles that contain some not yet drawn vertices, the drawing $\Gamma'$ can be completed to a geometric point-subset embedding $\Gamma$ of $G$ on $S$. Since a $3$-connected planar graph has a unique embedding once the external face is fixed, the completion of $\Gamma'$ to $\Gamma$ preserves the embedding of $G$.

Concerning the time complexity, it is easy to see that the drawing can be computed in $O(n)$ time once the input points are sorted from left to right. Such a sorting preprocessing requires time $O(k \log k)$. \end{proof}

\begin{figure}
\centering
\subfigure[]{\label{fi:drawing1}\includegraphics[scale=0.5]{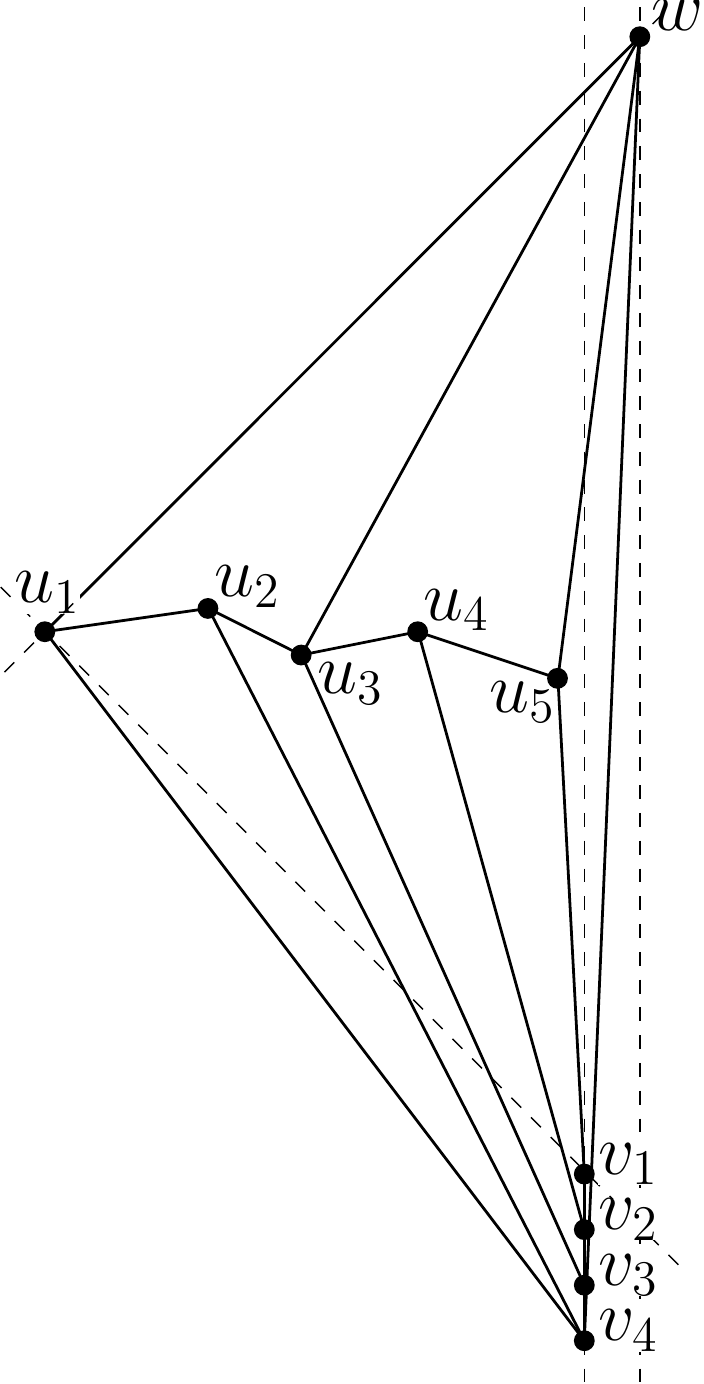}}
\hfil
\subfigure[]{\label{fi:drawing2}\includegraphics[scale=0.5]{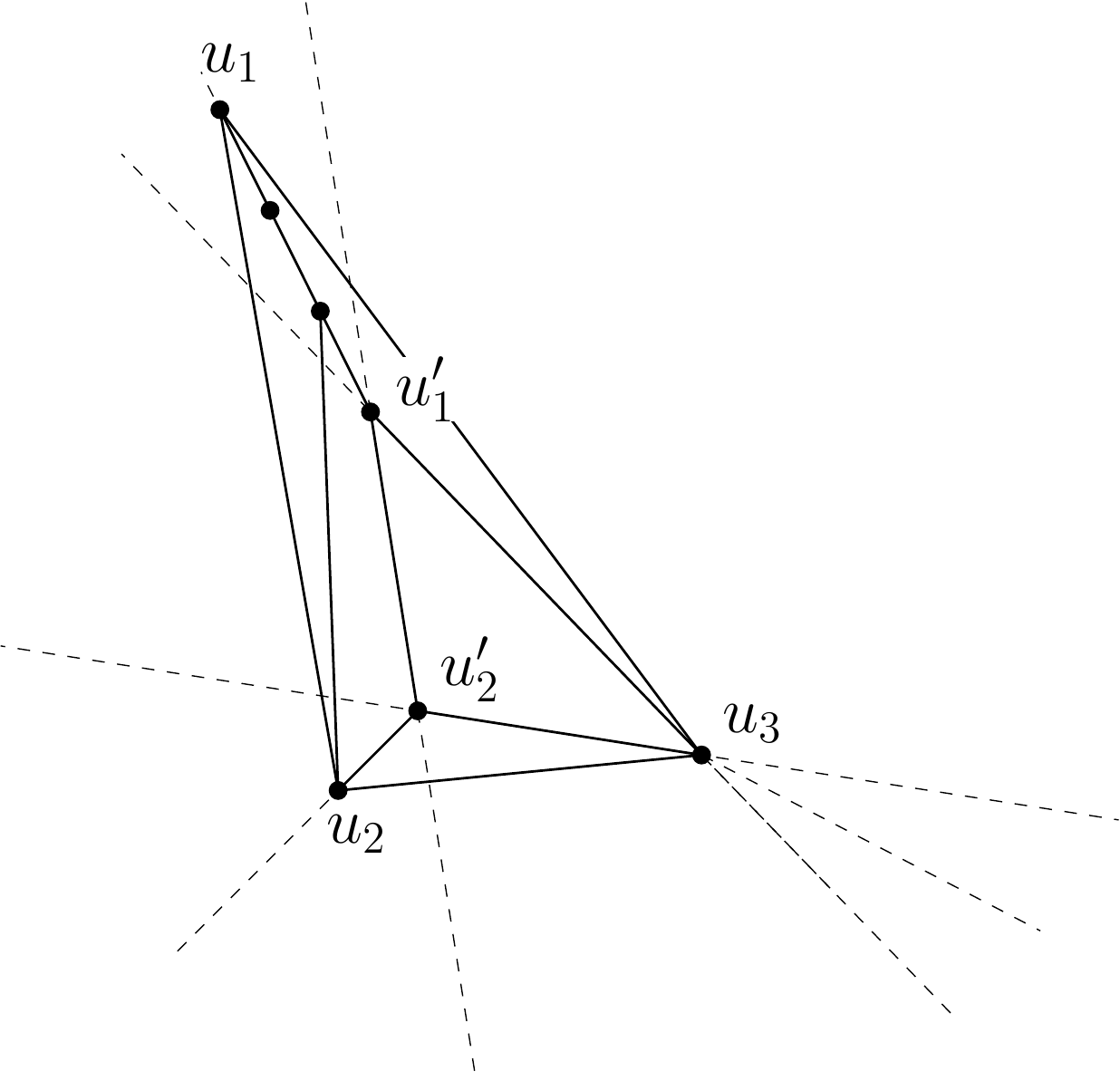}}
\caption{(a) Illustration of the drawing technique of Lemma~\ref{le:embedding.1.a}. (b) Illustration of the drawing technique of Lemma~\ref{le:embedding.1.b}.}
\end{figure}

\begin{lemma}\label{le:embedding.1.b}
Let $G$ be an $n$-vertex maximal plane graph. Let $u_1$, $u_2$, and $u_3$ be the three external vertices of $G$ and let $u'_1$, $u'_2$, and $u'_3$ be three vertices of an internal face $f$ such that there exists three induced paths $u_1-u'_1$, $u_2-u'_2$, and $u_3-u'_3$ in $G$ (possibly of zero length). Let $\tau$ be a straight-line drawing of $f$ on any set $S$ of three points. $G$ admits a geometric point-subset embedding on $S$ that has $\tau$ as a subdrawing. Such a point-subset embedding can be computed in $O(n)$ time.
\end{lemma}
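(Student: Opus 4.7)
The plan is to mimic the construction of Lemma~\ref{le:embedding.1.a} after re-embedding $G$ so that $f$ is the outer face. Since $G$ is maximal plane and hence $3$-connected, its planar embedding is unique up to the choice of outer face, so we may think of $G$ as embedded with $f$ as the outer face; the original outer triangle $u_1 u_2 u_3$ then becomes an internal face, and the three induced paths $\pi_1, \pi_2, \pi_3$ from $u_j$ to $u'_j$ run from the inner triangle to the outer boundary. Together with $f$ and the triangle $u_1 u_2 u_3$, these paths decompose $G$ into three subgraphs $G_1, G_2, G_3$ where $G_j$ is bounded by a cycle consisting of $\pi_j$, the edge $(u'_j, u'_{j+1})$, $\pi_{j+1}$, and the edge $(u_j, u_{j+1})$ (indices mod $3$).

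First, I place $u'_1, u'_2, u'_3$ at the three points of $S$, realizing $\tau$. Then I pick an interior point $o$ of the triangle $S$ and three points $o_1, o_2, o_3$ in a very small triangle around $o$, and place $u_j$ at $o_j$ so that the cyclic order $o_1 o_2 o_3$ is consistent with that of $u'_1 u'_2 u'_3$ and the three segments $\seg{o_j u'_j}$ are pairwise non-crossing. Each path $\pi_j$ is then drawn as a polyline by placing its internal vertices in order along $\seg{o_j u'_j}$.

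The resulting subdrawing is planar and partitions the interior of the triangle $S$ into three regions $R_1, R_2, R_3$. By taking $o_1, o_2, o_3$ sufficiently close to $o$, each $R_j$ is a star-shaped polygon. Because $\pi_j$ and $\pi_{j+1}$ are induced paths and the edges $(u'_j,u'_{j+1})$, $(u_j,u_{j+1})$ are single edges, the boundary cycle of $R_j$ has no chord in $G_j$; combined with the fact that $G_j$ is internally triangulated (since $G$ is), the Avis criterion invoked in Lemma~\ref{le:embedding.1.a} implies that $G_j$ is $3$-connected. Hence Theorem~\ref{theorem:HongNagamochi} extends the drawing of the boundary of $R_j$ to a planar straight-line drawing of $G_j$. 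The union of the three extensions with the already-drawn subdrawing is a geometric point-subset embedding of $G$ on $S$ containing $\tau$, and preservation of the original embedding follows from the uniqueness of the embedding of a $3$-connected planar graph once the outer face is fixed.

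The main technical obstacle will be verifying star-shapedness of each $R_j$; this is exactly where the freedom to squeeze the inner triangle $o_1 o_2 o_3$ around an interior point $o$ of $S$ is essential, since for small enough perturbations every $R_j$ deforms continuously from the degenerate triangle $u'_{j} u'_{j+1} o$, which is trivially star-shaped. No sorting is required (only three given points, three line-segment paths, and three applications of the linear-time algorithm of Theorem~\ref{theorem:HongNagamochi}), so the total running time is $O(n)$.
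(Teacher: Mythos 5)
Your construction inverts the paper's: the paper keeps $\tau$ as the (empty) triangle representing the internal face $f$ and grows the three induced paths \emph{outward} from the corners of $\tau$ along the exterior angle bisectors, so that the interior of $\tau$ stays empty and $u_1u_2u_3$ remains the external face. You instead re-embed $G$ with $f$ as the outer face and draw everything \emph{inside} $\tau$. This is a genuine gap for two reasons. First, by the convention fixed at the end of Section~\ref{se:preliminaries}, a geometric point-subset embedding must preserve the planar embedding of $G$, which includes the choice of external face; your drawing has $f$ as its external face rather than $u_1u_2u_3$, so it does not preserve the embedding (your final sentence about uniqueness of the embedding ``once the outer face is fixed'' is self-defeating, since you changed the outer face). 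Second, and fatally for the intended use in the proof of Theorem~\ref{th:general}, the drawing produced by this lemma must leave the interior of $\tau$ empty, because the already-computed drawing $\Gamma'$ of the subgraph $G'$ lies inside $\tau$ and the two drawings are glued along $\tau$; your drawing fills the interior of $\tau$ with all of $G$, so the union with $\Gamma'$ would not be planar.

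There is a second, independent error: you claim that the boundary cycle of each region $R_j$ has no chord in $G_j$ because $\pi_j$ and $\pi_{j+1}$ are induced. Induced-ness of each path separately says nothing about edges joining a vertex of $\pi_j$ to a vertex of $\pi_{j+1}$, and such edges exist in general: the region between two consecutive paths is internally triangulated, so unless it is a single triangle it must contain chords or interior vertices, and typically both. Consequently $G_j$ need not be $3$-connected, and Theorem~\ref{theorem:HongNagamochi} cannot be applied to it with the boundary of $R_j$ as the outer cycle. The paper handles this by first drawing, as straight segments, \emph{all} edges connecting vertices of the three paths (these are exactly the possible chords), observing that every face of the resulting subdrawing is star-shaped, and only then invoking Theorem~\ref{theorem:HongNagamochi} face by face.
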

\begin{proof}
Let $p_1$, $p_2$, and $p_3$ be the three points representing  $u_1$, $u_2$, and $u_3$, respectively. Let $\rho_{i}$ be the bisector of the angle at $p_i$ of triangle $\tau$ ($0 \leq i \leq 2$), and let $\rho^+_i$ be the half-line contained in $\rho_i$ that is completely outside $\tau$ (except for point $p_i$). Let $u_i=v_{i,0}, v_{i,1}, \dots, v_{i,k_i}=u'_i$ be the vertices of the path $u_i-u'_i$ ($0 \leq i \leq 2$). Place vertex $v_{i,j}$ along $\rho^+_i$ at distance $j$ from $p_i$ (see Figure~\ref{fi:drawing2} for an illustration of the drawing technique). Since each of the paths $u_i-u'_i$ is induced, there is no edge connecting non consecutive vertices along the path, and therefore each path can be drawn as a subdivision of a segment. We also add to the drawing all segments representing edges connecting the vertices of the three induced paths. Since the angle between $\rho_i$  and $\rho^+_i$ is less than $\pi$, and since the vertices of each path $u_i--u'_i$ are placed along $\rho^+_i$ in the same order they have along the path $u_i--u'_i$, the resulting drawing is planar. Moreover, each face of the drawing is drawn as a star-shaped polygon. In the planar embedding of $G$ each vertex not yet drawn is inside one of the cycles represented by the faces of the computed drawing. Thus, the drawing can be completed by using Theorem~\ref{theorem:HongNagamochi} as described in the proof of Lemma~\ref{le:embedding.1.a}. It is immediate to see that the computed drawing preserves the embedding and that the algorithm time complexity in $O(n)$.
\end{proof}

\subsection{Large Induced Outerpath}\label{ss:general-induced}

We now show that each $n$-vertex maximal plane graph
contains an external outerpath with a side path of size $\left \lceil
\frac{\sqrt{\log_2{n}}-1}{4}\right \rceil$. Our proof is based on the fact that an $n$-vertex maximal plane graph contains either many nested separating triangles or a large $4$-connected subgraph. Thus, we first concentrate on
graphs with many separating triangles and then on $4$-connected graphs.

\subsubsection{Large Induced Outerpath in Graphs with Many Nested Triangles}

In order to prove that a graph with many separating triangles contains a large induced outerpath, we need to introduce the Schnyder realizer~\cite{Schnyder90}, which will also be used in Section~\ref{se:convex}.

\paragraph{Schnyder realizers.} Let $G$ be a maximal plane graph with $n \geq 3$ vertices. A \emph{Schnyder realizer}~\cite{Schnyder90} of $G$ is an orientation plus a coloring of the internal edges of $G$ such that:
\begin{enumerate}
\item Each internal edge receives one of three colors $0$, $1$, and $2$;
\item Each internal vertex has exactly one outgoing edge of each color;
\item The counterclockwise order of the outgoing edges around each internal vertex is $0$, $1$, $2$;
\item For each internal vertex, the incoming edges of a color appear counterclockwise between the two outgoing edges of the other two colors.
\end{enumerate}

\begin{figure}
\centering
\subfigure[]{\includegraphics[scale=0.95]{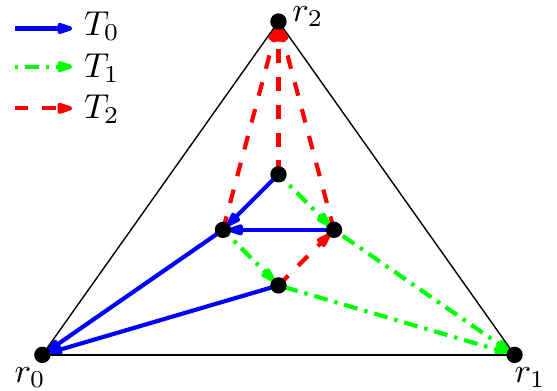}\label{fi:schnyder}}
\subfigure[]{\includegraphics[scale=0.95]{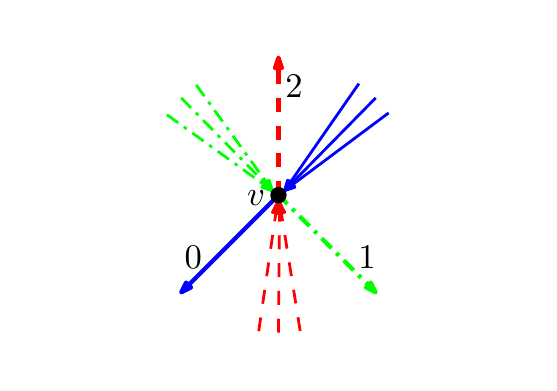}\label{fi:circularOrder}}
\subfigure[]{\includegraphics[scale=0.95]{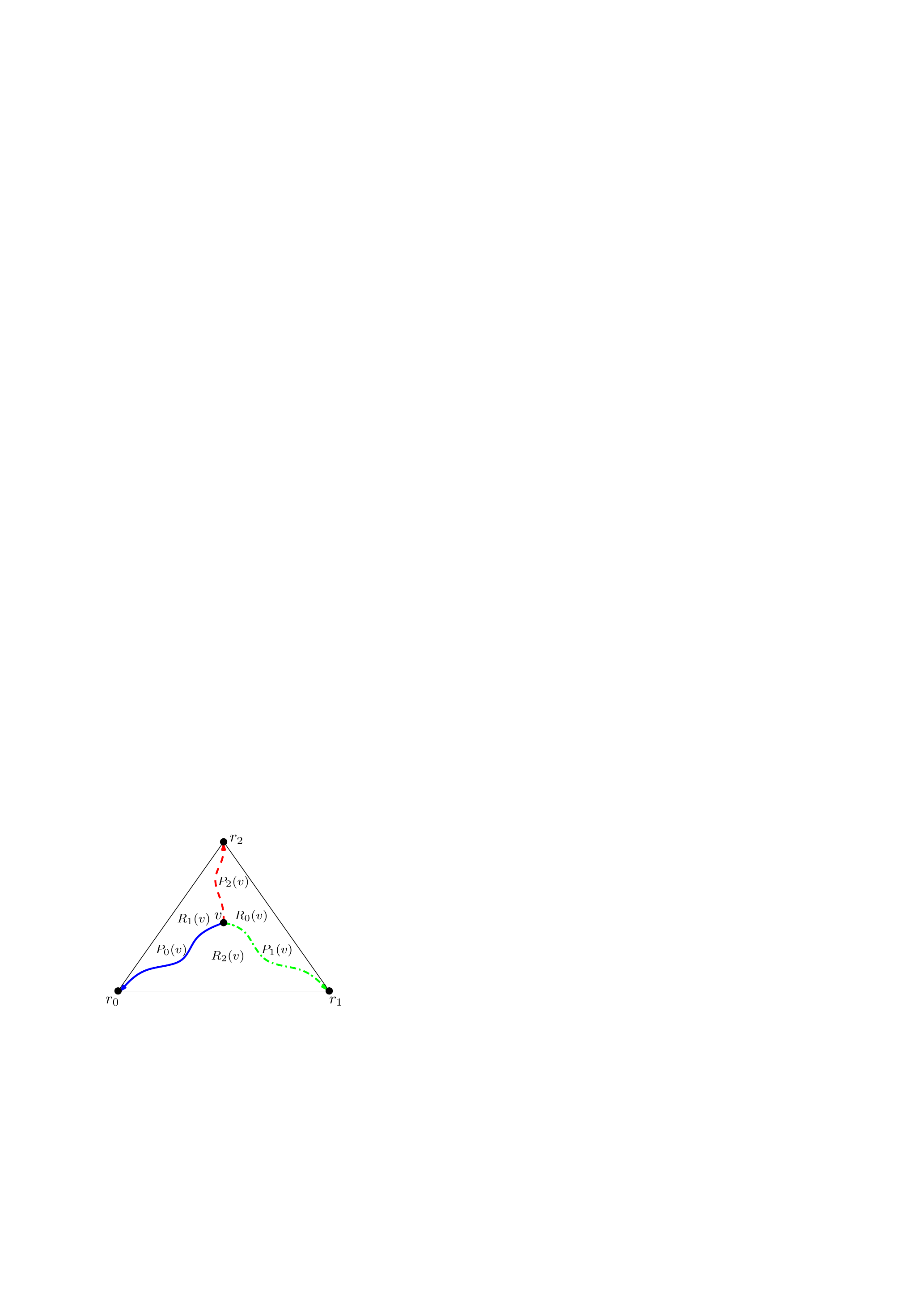}\label{fi:regions}}
\caption{(a) A Schnyder realizer $\T=\{T_1,T_2,T_3\}$ of a maximal plane graph. (b) The circular ordering of the edges around an internal vertex $v$. (c) The three paths $P_1(v)$, $P_2(v)$, and $P_3(v)$ and the three regions $R_1(v)$, $R_2(v)$, and $R_3(v)$}
\end{figure}

See Figure~\ref{fi:schnyder} for an example of a Schnyder realizer of a maximal plane graph equipped with. Note that conditions $3$ and $4$ of the definition of a Schnyder realizer imply that the counterclockwise circular ordering of the edges around each internal vertex is the one shown in Figure~\ref{fi:circularOrder}, i.e., outgoing colored $0$, incoming colored $2$, outgoing colored $1$, incoming colored $0$ outgoing colored $2$, incoming colored $1$, where each of the incoming set can be empty.
It is known that every maximal planar embedded graph has a Schnyder realizer. Let $T_0$, $T_1$, and $T_2$ be the subgraph induced by the edges of color $0$, $1$, and $2$, respectively. It is possible to prove each of $T_0$, $T_1$, and $T_2$ is a tree containing all inner vertices and exactly one outer vertex. Each edge in each tree is oriented from a vertex to its parent and the single outer vertex of each tree has only incoming edges and therefore is the root of the tree. We will denote the root of each tree $T_i$ by $r_i$, $0 \leq i \leq 2$. The three roots $r_0$, $r_1$, and $r_2$ are distinct and their counterclockwise order on the external boundary is $0$, $1$, and $2$. In the following we denote a Schnyder realizer $\{T_0, T_1, T_2\}$ of $G$ as $\T$.

Let $G$ be a maximal plane graph equipped with a Schnyder realizer $\T$. Let $v$ be an internal vertex and let $P_i(v)$, $0 \leq i \leq 2$ be the oriented path of $T_i$ from $v$ to $r_i$. The path $P_i(v)$ is called the \textit{$i$-path starting at $v$}. For $i \neq j$ ($0 \leq i,j \leq 2$), the two paths $P_i(v)$ and $P_j(v)$ only share the vertex $v$. Thus, for each internal vertex $v$, the three paths $P_0(v)$, $P_1(v)$, and $P_2(v)$ divide $G$ into three regions $R_0(v)$, $R_1(v)$, and $R_2(v)$, where $R_i(v)$ ($0 \leq i \leq 2$) denotes the vertices that are inside the cycle $P_{i-1}(v) \cup P_{i+1}(v) \cup (r_{i-1},r_{i+1})$ (where indices are taken modulo $3$). Notice that the regions $R_0(v)$, $R_1(v)$, and $R_2(v)$ do not contain the vertices of $P_0(v)$, $P_1(v)$, and $P_2(v)$\footnote{In the literature these regions are usually defined as including also $P_0(v)$, $P_1(v)$, and $P_2(v)$. For our purposes, however, it results more useful to use this different definition.}. The following lemmas can be proven by exploiting the properties of Schnyder realizer.

\begin{lemma}\label{lemma:Schnyder_induced}
Let $G$ be a maximal plane graph equipped with a Schnyder realizer $\T$ and let $v$ be an internal vertex of $G$. Each $P_i(v)$ ($0 \leq i \leq 2$) is an induced path of $G$.
\end{lemma}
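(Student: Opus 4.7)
The plan is to argue by contradiction: assume $P_i(v)=v_0,v_1,\dots,v_k$ (with $v_0=v$ and $v_k=r_i$) contains a chord $e=(v_a,v_b)$ of $G$ with $a+2\le b\le k$, and derive a contradiction by splitting on the color of $e$.

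First I would dispose of the case in which $e$ is colored $i$. By condition~2 of the Schnyder realizer definition, orienting $e$ as $v_a\to v_b$ would give $v_a$ a second outgoing color-$i$ edge besides $(v_a,v_{a+1})$; orienting $e$ as $v_b\to v_a$ would either give $v_b$ a second outgoing color-$i$ edge (if $b<k$) or endow the root $r_i=v_k$ with an outgoing color-$i$ edge, contradicting the fact that $r_i$ has only incoming edges. So $e$ is colored some $j\in\{i-1,i+1\}$ modulo $3$; by symmetry I may take $j=i+1$. Next, if $e$ is oriented $v_a\to v_b$, then $v_b=p_{i+1}(v_a)\in P_{i+1}(v_a)$; since $v_b\in P_i(v_a)$ as well and $v_b\neq v_a$, this contradicts the excerpt's explicit statement that $P_i(v_a)$ and $P_{i+1}(v_a)$ share only the vertex $v_a$.

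The remaining (and most delicate) case is that $e$ is colored $i+1$ and oriented $v_b\to v_a$, so that $v_a=p_{i+1}(v_b)$. Here I would study the simple cycle $C$ formed by $e$ and the subpath $\pi=v_a,v_{a+1},\dots,v_b$ of $P_i(v)$; its boundary consists of color-$i$ edges along $\pi$ together with the single color-$(i+1)$ chord. Using the counterclockwise pattern at every internal vertex, namely (out-$i$, in-$(i-1)$, out-$(i+1)$, in-$i$, out-$(i-1)$, in-$(i+1)$) implied by conditions~3 and~4, I would determine the angular positions of the two cycle edges at the endpoints: at $v_a$ the chord sits in in-$(i+1)$ while $(v_a,v_{a+1})$ sits in out-$i$, and the wide wedge between them contains the outgoing color-$(i-1)$ edge to $p_{i-1}(v_a)$; symmetrically at $v_b$. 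I would then argue that on whichever side of $C$ is the interior (the side not containing the external face, and hence none of the roots $r_0,r_1,r_2$), at least one of $p_{i-1}(v_a)$ or $p_{i-1}(v_b)$ lies inside $C$. The corresponding color-$(i-1)$ tree path must eventually reach the external root $r_{i-1}$, but it uses only color-$(i-1)$ edges, so it cannot cross $\partial C$ through any boundary edge (which are all colored $i$ or $i+1$); moreover, a careful ccw check at any intermediate cycle vertex it might visit shows that conditions~3 or~4 would be violated there. Hence the path is trapped inside $C$, the desired contradiction.

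The main obstacle will be this final case. The two earlier reductions are essentially bookkeeping with condition~2 and the already-stated path-sharing property, but the trapped-path argument requires careful local angular analysis at every vertex of $C$, a consistent identification of the interior side of $C$, and a per-vertex ccw check to rule out unwanted incoming color-$(i-1)$ edges. The technical heart of the argument is the observation that the three Schnyder trees partition the exits from every simple cycle whose boundary uses only two of the three colors, so any tree path in the third color that starts inside such a cycle has no way to escape.
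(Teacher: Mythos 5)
Your first three reductions are fine, and in places cleaner than the paper's: the color-$i$ case via condition~2 (a second outgoing color-$i$ edge, or an outgoing edge at the root) and the ``outgoing at $v_a$'' case via the stated fact that $P_i(v_a)$ and $P_{i+1}(v_a)$ meet only in $v_a$ are both correct. The gap is in your final case, and it is the key claim: that a careful ccw check at intermediate cycle vertices blocks the color-$(i-1)$ tree path, and more generally that ``any tree path in the third color that starts inside such a cycle has no way to escape.'' This is false. At an intermediate vertex $v_j$ of $\pi$ the rotation is (ccw) out-$i$, in-$(i-1)$, out-$(i+1)$, in-$i$, out-$(i-1)$, in-$(i+1)$, so the \emph{incoming} color-$(i-1)$ edges of $v_j$ lie in the same wedge as its outgoing color-$(i+1)$ edge, i.e.\ on the side of $\pi$ into which the chord departs from $v_b$ --- and that side is exactly the interior of $C$ (the other side contains $(v_b,v_{b+1})$ and hence $r_i$ and the outer face). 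So a color-$(i-1)$ path may perfectly well arrive at $v_j$ from inside $C$ through an incoming color-$(i-1)$ edge and leave through $v_j$'s outgoing color-$(i-1)$ edge, which sits in the exterior wedge: nothing local stops the escape. Your argument is salvageable, but not by the mechanism you describe: either trap the color-$(i+1)$ path $P_{i+1}(v_a)$ instead (its first edge also points into the interior of $C$, and the incoming color-$(i+1)$ edges at \emph{every} vertex of $C$ lie in the exterior wedge, so the local check genuinely works for that color), or keep $P_{i-1}(v_a)$ but justify the trapping globally, by the disjointness of $P_{i-1}(v_a)$ and $P_i(v_a)$, which prevents the path from ever reaching a vertex of $C$ at all.

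It is also worth noting that the paper disposes of this case without any cycle or trapping argument: the chord is an outgoing color-$(i+1)$ edge at $v_b$, hence by the rotation it locally enters the region $R_{i-1}(v)$ side of $P_i(v)$, while at $v_a$ it is an incoming color-$(i+1)$ edge, hence locally on the $R_{i+1}(v)$ side; since a chord, not crossing $P_i(v)$, lies entirely in one region, this is already a contradiction. If you want to keep your global route, you must also say why the relevant parent ($p_{i+1}(v_a)$ or $p_{i-1}(v_a)$) is strictly inside $C$ rather than on $\pi$ (for $p_{i+1}(v_a)$ this follows from your already-settled outgoing case; for $p_{i-1}(v_a)$ from path disjointness), so as not to smuggle in the very statement being proved for the other color.
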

\begin{proof}
Let $v=u_0, u_1, \dots, u_k=r_i$ be the vertices of $P_i(v)$ in the order they appear walking along $P_i(v)$ from $v$ to $r_i$. Suppose, as a contradiction that $P_i(v)$ is not induced. This means that $P_i(v)$ has a chord, i.e., an edge connecting two non consecutive vertices of $P_i(v)$. We first observe that the chord must be completely contained either in the region $R_{i+1}(v)$ or in the region $R_{i-1}(v)$ (except for the endvertices), because otherwise either $r_i$ would not be on the external boundary, or the chord would be crossed by $P_{i+1}(v)$ and by $P_{i-1}(v)$. Also, the chord cannot be an edge colored $i$, otherwise there would be a cycle in $T_i$. As a consequence the chord cannot be incident to vertex $r_i$ (all the edges incident to $r_i$ have color $i$ except the two on the external boundary). For each vertex $u_j$ ($0 < j < k$) the edge $(u_{j},u_{j+1})$ is the unique outgoing edge of $u_j$ colored $i$, while $(u_{j-1},u_{j})$ is an incoming edge of $u_j$ colored $i$. By the properties of the Schnyder realizer the outgoing edge of $u_j$ colored $i+1$ appears between $(u_{j},u_{j+1})$ and $(u_{j-1},u_{j})$ in the counterclockwise order around $u_j$, while the incoming edges colored $i+1$ appears between  $(u_{j-1},u_{j})$ and $(u_{j},u_{j+1})$ in the counterclockwise order around $u_j$. In other words, the outgoing edge of $u_j$ colored $i+1$ and the incoming edges of $u_j$ colored $i+1$ are on different sides of the path $P_i(v)$. The same is true also for the edges colored $i-1$. This implies that there cannot be a chord colored $i+1$ or $i-1$ connecting two vertices $u_j$ and $u_l$ with $0 < j,l < k$. The only possibility remaining is that the chord connects $v$ to a vertex $u_j$ with $0 < j < k$ and, as already said, it is colored either $i+1$ or $i-1$. The chord cannot be an outgoing edge of $v$ otherwise one of the two paths $P_{i+1}(v)$ and $P_{i-1}(v)$ would share a vertex with $P_i(v)$, which is not possible. So the chord $e=(v,u_j)$ must be an incoming edge of $v$ colored $i+1$ or $i-1$. Suppose it is colored $i+1$ (the other case is symmetric). Then $e$ must be contained in the region $R_{i+1}(v)$ (it must appear counterclockwise between the outgoing edge of $v$ colored $i$ and the outgoing edge of $v$ colored $i-1$). On the other hand, $e$ is also an outgoing edge of $u_j$ colored $i+1$ and therefore it must be contained in the region $R_{i-1}(v)$ (it must appear counterclockwise between $(u_{j},u_{j+1})$ and $(u_{j-1},u_{j})$). But, as already said, a chord must be contained in only one region, and therefore we have a contradiction.\end{proof}

\begin{lemma}\label{lemma:Schnyder_outerpath}
Let $G$ be a maximal plane graph equipped with a Schnyder realizer $\T$ and let $v$ be an internal vertex of $G$. The subgraph $H_{ij}$ of $G$ induced by $P_i(v)$, $P_j(v)$ ($0 \leq i,j \leq 2$) is an outerpath with the edge $(r_i,r_j)$ on the external boundary of $G$. Also, $P_i(v)$ and $P_j(v)$ are both side paths of $H_{ij}$ with respect to $(r_i,r_j)$.
\end{lemma}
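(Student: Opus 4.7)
Let $k$ be the unique color in $\{0,1,2\}\setminus\{i,j\}$. The plan is to prove the three claims in sequence: first that $H_{ij}$ is outerplanar, then that it is an outerpath, and finally that $P_i(v)$ and $P_j(v)$ are the two side paths with respect to $(r_i,r_j)$.

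The starting point is the standard Schnyder property that $P_i(v)$ and $P_j(v)$ meet only at $v$ and that $(r_i,r_j)$ is an edge of the external triangle of $G$. Hence $C := P_i(v)\cup P_j(v)\cup (r_i,r_j)$ is a simple cycle whose interior, in the planar embedding of $G$, is precisely the region $R_k(v)$.

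I would then argue that $H_{ij}$ is biconnected outerplanar with outer face equal to the exterior of $C$. By Lemma~\ref{lemma:Schnyder_induced}, neither $P_i(v)$ nor $P_j(v)$ contains a chord of $G$, so every edge of $H_{ij}$ that is not on $C$ must join a vertex of $P_i(v)\setminus\{v\}$ to a vertex of $P_j(v)\setminus\{v\}$. The crucial and most delicate step is to show that every such chord is drawn inside $C$: the portion of the interior of the outer triangle lying outside $C$ decomposes into the regions $R_i(v)$ and $R_j(v)$, which are separated from each other by the path $P_k(v)$; since $P_i(v)$ lies on the boundary of $R_j(v)$ and $P_j(v)$ on the boundary of $R_i(v)$, any chord drawn outside $C$ would be forced to cross $P_k(v)$, contradicting planarity. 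Consequently every chord of $H_{ij}$ lies inside $C$, every vertex of $H_{ij}$ lies on the spanning cycle $C$, and $H_{ij}$ is biconnected outerplanar.

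Next I would prove that the weak dual of $H_{ij}$ is a path. Since every chord has one endpoint on $P_i(v)$ and the other on $P_j(v)$, and chords are drawn non-crossing inside $C$, a brief case analysis shows that two chords $(u_a,w_b)$ and $(u_{a'},w_{b'})$ coexist only when $a\le a'$ iff $b\le b'$; hence the chords are linearly ordered along the two paths. Reading them in this order, each internal face of $H_{ij}$ is bounded by a sub-path of $P_i(v)$, a sub-path of $P_j(v)$, and at most two ``transverse'' edges drawn from the chords together with $(r_i,r_j)$. Thus every internal face has at most two neighbors in the weak dual, which must therefore be a path; its two endpoint faces $f_1$ and $f_2$ contain, respectively, the vertex $v$ and the edge $(r_i,r_j)$.

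To conclude, $(r_i,r_j)$ belongs to $f_2$ and to the external boundary of $H_{ij}$, so it is an extremal edge. The external-boundary edges of $f_1$ are cycle edges incident to $v$, so $v$ is an endvertex of an extremal edge of $f_1$. Since $P_i(v)$ and $P_j(v)$ are sub-paths of the external boundary $C$ connecting $r_i$ and $r_j$ respectively to $v$, they are exactly the two side paths of $H_{ij}$ with respect to $(r_i,r_j)$.
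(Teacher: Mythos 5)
Your proof is correct and follows essentially the same approach as the paper: both use Lemma~\ref{lemma:Schnyder_induced} to show that every non-cycle edge of $H_{ij}$ joins $P_i(v)$ to $P_j(v)$, and both argue that such a chord drawn outside $C=P_i(v)\cup P_j(v)\cup (r_i,r_j)$ would have to cross $P_k(v)$. The only difference is that you spell out the non-crossing/linear-ordering argument for the chords to verify that the weak dual is a path, a step the paper's proof asserts without detail.
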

\begin{proof}
Let $C$ be the cycle $P_i(v) \cup P_j(v) \cup (r_i,r_j)$. Since $P_i(v)$ and $P_j(v)$ are both induced paths, then every edge of $H_{ij}$ that is not in $C$ connects a vertex of $P_i(v)$ to a vertex of $P_j(v)$. If one of such edges was outside $C$, then it would be crossed by path $P_k(v)$ ($k \neq i, k\neq j$). Since all edges not in $C$ are inside $C$, then $H_{ij}$ is outerplanar. Since each edge not in $C$ connects a vertex of $P_i(v)$ to a vertex of $P_j(v)$, then $H_{ij}$ is an outerpath and both $P_i(v)$ and $P_j(v)$ are its side paths. Clearly, edge $(r_i,r_j)$ is on the external boundary of $G$ and both, $P_i(v)$ and $P_j(v)$ are side paths with respect to $(r_i,r_j)$.
\end{proof}


A \emph{sequence of nested separating triangles} of a maximal plane graph $G$ is a set of $k$ separating triangles $t_1, t_2, \dots, t_k$ such that $t_{i} \succ t_{i+1}$, for $i=1,2,\dots,k-1$.

\begin{lemma}\label{lemma:ordered_triangles} Let $G$ be an $n$-vertex maximal plane graph. If $G$ contains a sequence of $k$ nested
separating triangles, then $G$ contains an external outerpath with a side path of size at least $\left \lceil (k+2)/3 \right \rceil$.
\end{lemma}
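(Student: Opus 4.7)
The plan is to equip $G$ with a Schnyder realizer $\T = \{T_0, T_1, T_2\}$ having roots $r_0, r_1, r_2$, and then to choose an internal vertex $v$ lying strictly inside the innermost separating triangle $t_k$. Such a $v$ exists because $t_k$ is separating, so its interior is nonempty, and a standard planarity argument on the nested chain $t_1 \succ t_2 \succ \cdots \succ t_k$ shows $v$ lies strictly inside every $t_j$ (an edge joining a vertex strictly inside $t_j$ to one strictly outside would cross the Jordan curve traced by $t_j$). I will then analyse the three induced paths $P_0(v), P_1(v), P_2(v)$ given by Lemma~\ref{lemma:Schnyder_induced}, show that one of them has at least $\lceil (k+2)/3 \rceil$ vertices, and pair it with another such path via Lemma~\ref{lemma:Schnyder_outerpath} to obtain the required external outerpath.

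The central step is a distribution claim: for each $j$, the three vertices of $t_j$ are split exactly one per path among $P_0(v), P_1(v), P_2(v)$. For the lower bound, each $P_i(v)$ connects $v$ (strictly inside $t_j$) to $r_i$ (on the external face, hence not strictly inside $t_j$), so by the Jordan curve theorem $P_i(v)$ must contain at least one vertex of $t_j$. For the upper bound, any two of these paths share only $v$ and $v \notin V(t_j)$, so the vertices of $t_j$ occurring on different paths are distinct; combined with $|V(t_j)| = 3$ this forces each $P_i(v)$ to meet $t_j$ in exactly one vertex.

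Next I would bound $|W|$, where $W := V(t_1) \cup \cdots \cup V(t_k)$. The nesting $t_j \succ t_{j+1}$ forces at least one vertex of $t_{j+1}$ to lie strictly inside $t_j$; by planarity such a vertex cannot belong to any of $t_1, \ldots, t_j$ (whose vertices all lie on or outside $t_j$). Hence $|W| \geq 3 + (k-1) = k+2$. Because the distribution claim makes $W$ the disjoint union of $V_i := V(P_i(v)) \cap W$ over $i \in \{0,1,2\}$, pigeonhole produces some $i$ with $|V(P_i(v))| \geq |V_i| \geq \lceil (k+2)/3 \rceil$. Choosing any $i' \neq i$, Lemma~\ref{lemma:Schnyder_outerpath} then tells us that the subgraph induced by $P_i(v) \cup P_{i'}(v)$ is an outerpath whose extremal edge $(r_i, r_{i'})$ lies on the external boundary of $G$ and with $P_i(v)$ as a side path with respect to this edge, which is exactly an external outerpath with a side path of size at least $\lceil (k+2)/3 \rceil$.

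The main obstacle will be nailing down the distribution claim in the corner cases, notably when some $r_i$ is itself a vertex of a triangle $t_j$ or when consecutive nested triangles share an edge. The Jordan curve argument still applies, because each $P_i(v)$ must exit the closed interior of $t_j$ somewhere, and the fact that distinct Schnyder paths share only $v$ keeps the per-triangle contributions disjoint across the three paths; once this structural observation is secured, the pigeonhole calculation that produces the bound $\lceil (k+2)/3 \rceil$ is straightforward.
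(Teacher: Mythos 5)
Your proposal is correct and follows essentially the same route as the paper's proof: pick a vertex inside the innermost triangle, observe that the three disjoint Schnyder paths each meet every nested triangle exactly once, count at least $k+2$ distinct vertices over the $k$ triangles, apply pigeonhole, and invoke Lemma~\ref{lemma:Schnyder_outerpath}. You merely spell out in more detail the "visits exactly once'' and "$k+2$ distinct vertices'' claims that the paper states without justification.
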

\begin{proof}
Let $t_1 \succ t_2 \succ \dots \succ t_k$ be the sequence of nested separating triangles of $G$ (see Figure~\ref{fig:sep_trianges_2}).
Since $t_k$ is a separating triangle, there exists a vertex $u$
which lies inside $t_k$. Let $\T$ be a Schnyder realizer of
$G$. Since the three paths $P_0(u)$, $P_1(u)$, and $P_2(u)$ of $G$ are disjoint, each path
visits each triangle $t_i$ ($i=1,2,\dots,k$) exactly once. Triangles $t_1,\dots,t_k$ contain in total at least
$k+2$ distinct vertices. Therefore at least one among $P_0(u)$, $P_1(u)$, and $P_2(u)$ contains at least $\left \lceil (k+2)/3 \right \rceil$ vertices.
Suppose it is $P_0(u)$. By Lemma~\ref{lemma:Schnyder_outerpath}, the graph $H_{01}$ induced by the vertices of $P_0(u)$ and $P_1(u)$ is an external outerpath of $G$ with a side path of size at least $\left \lceil (k+2)/3 \right \rceil$ (the same is true for the graph $H_{02}$ induced by the vertices of $P_0(u)$ and $P_2(u)$).
\end{proof}

\begin{figure}[htbp]
 \centering
 \subfigure[\label{fig:sep_trianges_2}]
 {\includegraphics[scale=0.95]{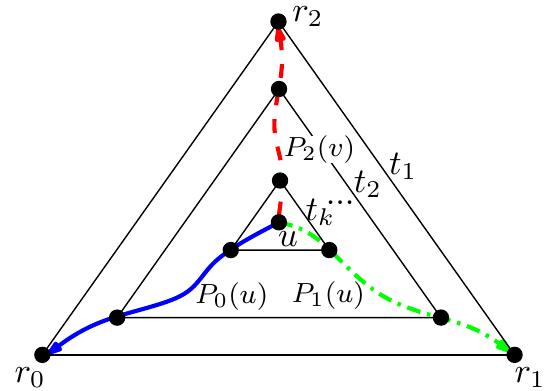}}
 \hfil
 \subfigure[\label{fig:thomassenconf}]
 {\includegraphics[scale=0.6]{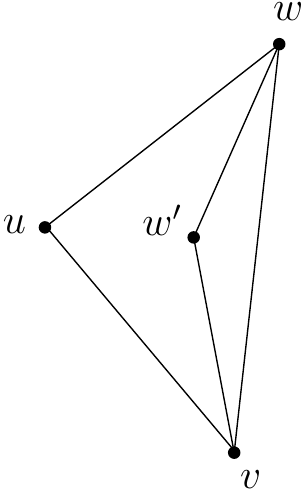}}
 \hfil
 \subfigure[\label{fig:thomassen}]
 {\includegraphics[scale=0.6]{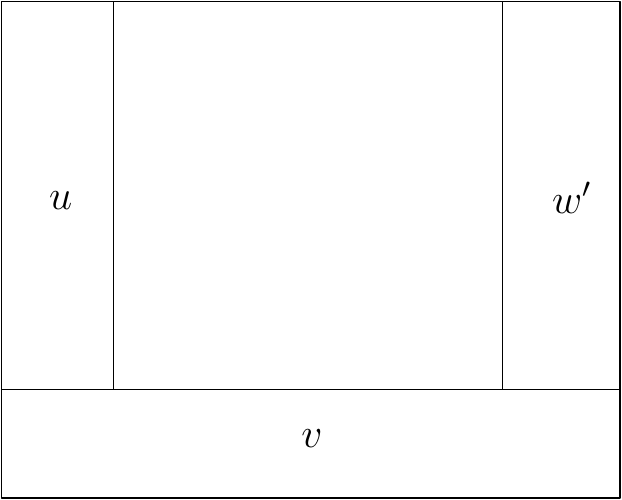}}
 \caption{(a) Illustration for the proof of Lemma~\ref{lemma:ordered_triangles}. (b)-(c) Illustration of the Theorem~\ref{theorem:thomassen}.}
 \label{fig:prelim}
\end{figure}

\subsubsection{Large Induced Outerpath in $4$-connected Graphs}

The proof that a $4$-connected graph contains a large induced outerpath relies on the following result by
Thomassen~\cite{Thomassen1986}. A planar drawing of $G$ such that
all edges (except four) are vertical or horizontal straight-line
segments and such that the four exceptional edges each consist of a
vertical and a horizontal straight-line segment is called
\emph{rectangular representation} of $G$.

\begin{theorem}[Thomassen~\cite{Thomassen1986}]\label{theorem:thomassen}
Let $G$ be a $4$-connected maximal plane graph and let $uvw$ and
$wvw'$ be two triangles of $G$. Then $G^\star$ has a rectangular
representation such that $w$ corresponds to the unbounded face and
such that $u$, $v$, and $w'$ correspond to rectangles as shown in
Figure~\ref{fig:thomassen}.
\end{theorem}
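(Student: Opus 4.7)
The plan is to build the rectangular representation from a combinatorial structure on $G$ known as a \emph{regular edge labeling} (REL), which assigns each interior edge of $G$ a color in $\{\mathrm{red}, \mathrm{blue}\}$ and an orientation so that around every interior vertex the cyclic order of incident edges decomposes into four consecutive arcs: outgoing red, outgoing blue, incoming red, incoming blue. Given an REL, the two monochromatic oriented subgraphs are acyclic; topological-distance coordinates derived from the red orientation supply $y$-coordinates of horizontal edges, while those derived from the blue orientation supply $x$-coordinates of vertical edges. Each internal face of $G$ is then realized as an axis-aligned rectangle, these rectangles tile a bounding rectangle, and the resulting diagram is a rectangular representation of $G^\star$.

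To achieve the prescribed corner assignment I would exploit the configuration of the two adjacent triangles. Since $uvw$ and $wvw'$ share the edge $vw$, the four vertices $u, v, w', w$ cluster around this edge; I would designate $u$, $v$, $w'$ as three of the four corner vertices of the REL's outer $4$-cycle (in the north, east, south roles, say), and let $w$ play the fourth corner role, identifying its rectangle with the unbounded complement of the bounding rectangle. Appropriately splitting the outer face of $G$ along the configuration suggested by $uvw$ and $wvw'$ yields an internally triangulated plane graph with a distinguished outer $4$-cycle, on which the REL machinery applies.

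The existence of an REL with these boundary conditions follows from the classical characterization of Kant and He, which guarantees an REL whenever the inner graph contains no separating triangle. This hypothesis is precisely ensured by the $4$-connectedness of $G$, since any separating $3$-ple of a maximal plane graph is a triangle. The standard existence argument proceeds by induction on the number of vertices: one identifies a boundary-adjacent interior vertex whose removal preserves both the internal triangulation and the absence of separating triangles, invokes the inductive hypothesis, and extends the REL by labeling and orienting the incident edges of the removed vertex in the unique way consistent with the four-arc pattern at each of its neighbors. The main obstacle is this induction step: one must prove existence of a reducible vertex (obtained via an Euler-type count combined with the interior minimum-degree bound of $4$) and show that the local extension is always feasible, which reduces to a short case analysis on the neighborhood of the reducible vertex. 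Once the REL is produced, the coordinate extraction is routine, and the prescribed roles of $u$, $v$, $w'$, and $w$ are realized by construction.
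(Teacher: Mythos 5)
The paper offers no proof of this statement: it is imported verbatim from Thomassen's 1986 paper and used as a black box in Lemma~\ref{lemma:four_connected}, so there is no in-paper argument to compare yours against. Your reconstruction via regular edge labelings is a legitimate, well-known alternative route---essentially the Kant--He proof of the rectangular-dual theorem---whereas Thomassen's original argument is a direct induction on the triangulation. Your use of $4$-connectedness is exactly right, since for a maximal plane graph it is equivalent to the absence of separating triangles, which is what the REL existence theorem requires. Two points need repair. First, ``each internal face of $G$ is then realized as an axis-aligned rectangle'' has the duality backwards: in a rectangular representation of $G^\star$ the bounded faces of the drawing correspond to the \emph{vertices} of $G$ (the theorem itself says $u$, $v$, $w'$ ``correspond to rectangles''), while the faces of $G$ are the junction points of the subdivision. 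Your later reference to ``$w$'s rectangle'' shows you know this, but as literally stated the construction would not prove the theorem. Second, the step you wave at as ``appropriately splitting the outer face'' is precisely where the work specific to this statement lies: one must delete $w$, take the cycle of its neighbours (on which $u$, $v$, $w'$ appear consecutively because the two given triangles are faces sharing the edge $vw$) as the new quadrangular frame with corners chosen so that $u$, $v$, $w'$ land as in Figure~\ref{fig:thomassen}, and verify that the derived graph still has no separating triangle. That check is short, but without it the prescribed corner assignment is asserted rather than proved.
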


\begin{lemma}\label{lemma:four_connected}
Let $G$ be an $n$-vertex $4$-connected maximal plane graph.
Then $G$ contains an external outerpath with a side path of size at least
$\left \lceil \log_2{(n-1)}/4 \right \rceil$.
\end{lemma}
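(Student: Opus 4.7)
The plan is to convert the $4$-connectivity hypothesis into a rectangular-tiling structure via Theorem~\ref{theorem:thomassen}, and then extract a long induced outerpath using a Mirsky-style pigeonhole argument.

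First, I would fix the external triangle $uvw$ of $G$ and pick any internal triangle $wvw'$ that shares the edge $vw$ with it (such a triangle exists since $vw$ lies on exactly two faces of the maximal plane graph $G$). Applying Theorem~\ref{theorem:thomassen} produces a rectangular representation of $G^\star$ in which $w$ is the unbounded face. Geometrically, a bounding rectangle is tiled by $n-1$ interior rectangles in bijection with the vertices of $G$ distinct from $w$; two rectangles share a boundary segment precisely when the corresponding vertices are adjacent in $G$, and a rectangle touches the top, bottom, left, or right side of the bounding rectangle precisely when the corresponding vertex is adjacent to $w$ along that portion of the external face.

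Next, I would define a partial order $\preceq$ on the $n-1$ interior rectangles so that chains correspond to induced sequences of vertices in $G$. A natural choice is to set $R \preceq R'$ iff there is a monotone-upward sequence of rectangles from $R$ to $R'$, each pair sharing a maximal horizontal boundary segment. A chain in this order then lifts to a sequence $v_1, \ldots, v_k$ of pairwise-consecutively-adjacent vertices of $G$, with $v_1$ touching the top of the bounding rectangle and therefore adjacent to $w$. Applying Mirsky's theorem to this poset, one gets either a chain of size at least $\lceil \log_2(n-1)/4 \rceil$, or a decomposition into fewer antichains; in the latter situation one rotates the argument by $90^\circ$ and/or swaps the role of the top with another side of the bounding rectangle. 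The factor $4$ in the stated bound absorbs precisely this four-way case analysis (two orthogonal directions times two orientations along each axis).

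Finally, once such a chain $v_1, \ldots, v_k$ is found, I would verify that the subgraph of $G$ it induces is a side path of an external outerpath. The rectangles of the chain, together with the rectangles that meet them from the opposite side, form a strip in the tiling whose dual in $G$ is an outerpath $H$; the chain is one side path of $H$, the extremal edge at $v_1$ lies on the external boundary of $G$ (because the topmost rectangle of the chain abuts $w$), and the non-consecutive vertices of the chain cannot be adjacent since any such chord would contradict the planarity of the rectangular tiling.

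The main obstacle is this last verification: showing that the chain obtained from Mirsky's theorem really produces an \emph{induced} path and that this path extends to an \emph{external} outerpath with an extremal edge on the boundary of $G$. This requires a careful planarity argument inside the rectangular representation to rule out chords between non-consecutive rectangles of the chain, and a careful use of the Thomassen setup (with $w$ on the outer face) to guarantee that the boundary of $H$ inherits the correct extremal edge from the external face of $G$.
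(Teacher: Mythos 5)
Your setup is the same as the paper's: fix the outer triangle $uvw$ and a face $wvw'$, and apply Theorem~\ref{theorem:thomassen} to obtain a subdivision of a rectangle into $n-1$ rectangles dual to $G-w$. Where your proposal breaks down is the counting step. The paper gets the bound $\left\lceil \log_2(n-1)/4 \right\rceil$ by invoking T\'oth's stabbing theorem~\cite{Toth08}: in any subdivision of a rectangle into $m$ rectangles there is a single axis-parallel line meeting at least $\left\lceil \log_2 m/4\right\rceil$ of them, and the rectangles met by one line, ordered along it, automatically form an \emph{induced} path in the dual. You replace this with a Mirsky-style argument on a partial order of rectangles, claiming the factor $4$ comes from a four-way case split. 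This cannot work quantitatively: Mirsky's theorem~\cite{M-71} yields product bounds of the form (longest chain)$\times$(largest antichain)$\ \geq N$, so a constant number of case splits produces a polynomial bound such as $N^{1/2}$ or $N^{1/4}$, never a logarithm. Since T\'oth's logarithmic bound is essentially tight for rectangle subdivisions, a Mirsky argument proving a polynomial-size chain of stacked rectangles would be provably false; the logarithm in the lemma is intrinsic and comes from a recursive argument in~\cite{Toth08}, not from pigeonholing over four directions.

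There is a second, structural gap: a chain in your order (``there exists a monotone-upward sequence of rectangles, consecutive ones sharing a horizontal segment'') does not obviously lift to an induced path of $G$. Consecutive elements of a chain are related only through the transitive closure, so they need not be adjacent, and even a saturated chain of rectangles need not be pierced by a common vertical line, which is exactly the property the paper uses to rule out chords between non-consecutive vertices. Your final paragraph correctly identifies that the induced-path and external-outerpath properties need verification, but the verification you sketch presupposes the stabbing-line structure you have not established. The paper's proof closes this by a two-case analysis (horizontal versus vertical stabber), using the prescribed positions of $u$, $v$, $w'$ from Theorem~\ref{theorem:thomassen} to exhibit the extremal edge ($(u,v)$ or $(v,w)$) on the external boundary; you would need to either adopt the stabber formulation or supply a genuinely different argument for both the length bound and the inducedness.
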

\begin{proof}
Let $u,w,v$ be the external boundary of $G$ and let $w'$ be the vertex of $G$, such that $w,v,w'$ is a face of $G$. By Theorem~\ref{theorem:thomassen}, $G^\star$ has a rectangular representation such that $w$ corresponds to the unbounded face and
such that $u$, $v$, and $w'$ correspond to rectangles as shown in Figures~\ref{fig:thomassen} and~\ref{fig:thomassenconf}. I.e, $G$ is the dual of a subdivision of a
rectangle into $n-1$ rectangles. In~\cite{Toth08} it is shown that there exists a horizontal or vertical line (called a \emph{stabber}) that
intersects at least $\left \lceil \log_2{(n-1)}/4 \right \rceil$ rectangles in this subdivision. The rectangles stabbed by a stabber represent an induced path $P$ of $G$. 

\begin{figure}[htbp]
 \centering
 \subfigure[\label{fig:thomassen_horizontal}]
 {\includegraphics[scale=0.6]{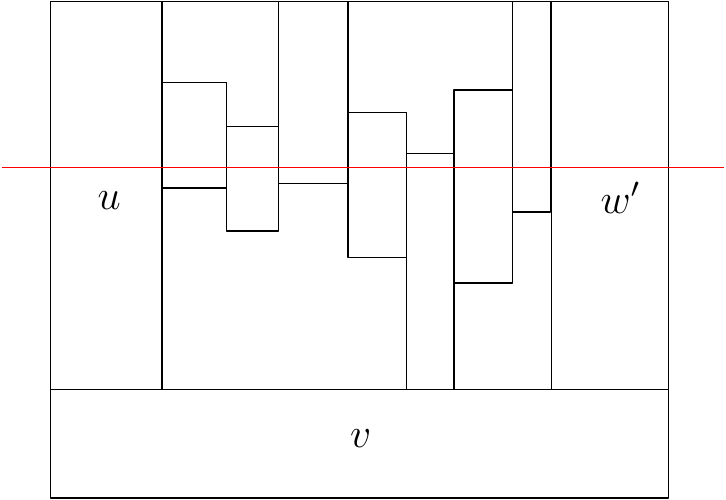}}
 \hfil
 \subfigure[\label{fig:thomassenconf_horizontal}]
 {\includegraphics[scale=0.6]{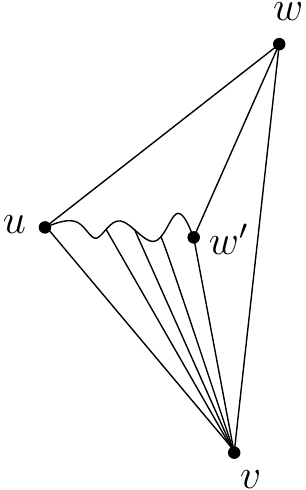}}
 \hfil
 \subfigure[\label{fig:thomassen_vertical}]
 {\includegraphics[scale=0.6]{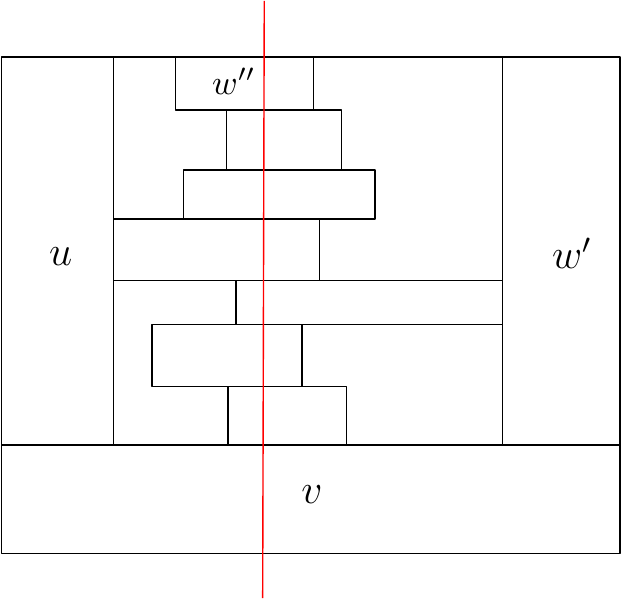}}
 \hfil
 \subfigure[\label{fig:thomassenconf_vertical}]
 {\includegraphics[scale=0.6]{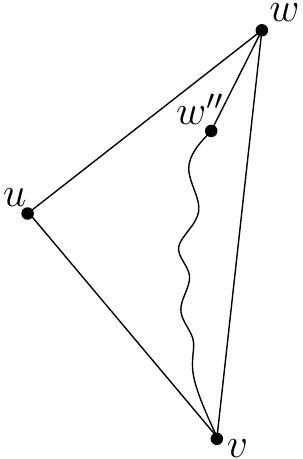}}
 \caption{Illustration of the proof of Lemma~\ref{lemma:four_connected}. (a)-(b) Long stabber is horizontal. (c)-(d) Long stabber is vertical.}
 \label{fig:horizontal}
\end{figure}

In the following we show that $G$ contains an external outerpath that has $P$ as a side path.
We consider two cases based on whether the long stabber is vertical or horizontal.

\begin{description}
\item[Long stabber is horizontal] Refer to Figures~\ref{fig:thomassen_horizontal}-~\ref{fig:thomassenconf_horizontal}. Path $P$ connects $u$ to $w'$ and edges $(w',v)$ and $(u,v)$ belong to $G$. Let $C$ be the cycle $P \cup (w',v) \cup (u,v)$ and let $H$ be the subgraph induced by the vertices in $C$. Since $P$ is an induced path, each edge of $H$ that does not belong to $C$ is incident to $v$. From the reactangular representation, it is immediate to see that all the edges connecting $v$ to $P$ lie between $(v,u)$ and $(v,w')$ in the clockwise circular order around $v$, i.e., they are inside the cycle $C$. Thus, $H$ is an outerpath; it has an extremal edge (i.e., the edge $(u,v)$) on the external boundary of $G$; and $P$ is a side path of $H$ with respect to $(u,v)$.

\item[Long stabber is vertical] Refer to Figures~\ref{fig:thomassen_vertical}-~\ref{fig:thomassenconf_vertical}. Let $w''$ be the vertex representing the highest rectangle stabbed by the stabber. Edge $(w'',w)$ belongs to $G$ (recall that $w$ is represented by the unbounded face). Moreover, vertex $w$ is not
adjacent to any other vertex of $P$ except for $v$. Let $C$ be the cycle $P \cup (w'',w) \cup (w,v)$. Since $P$ is an induced path and there is no edge connecting $w$ to a vertex of $P$ distinct from $v$, the subgraph $H$ induced by the vertices of $C$ coincides with $C$ itself, i.e., $H$ is a cycle. Thus, $H$ is an outerpath; it has an extremal edge (i.e., the edge $(v,w)$) on the external boundary of $G$; and $P$ is a side path of $H$ with respect to $(u,v)$.
\end{description}
\end{proof}

\subsection{Putting All Together}

We are now ready to combine the results of Section~\ref{ss:general-embedding} and~\ref{ss:general-induced} in order to prove Theorem~\ref{th:general}. To this aim, we use the following decomposition tree.

\paragraph{4-block tree.} Let $G$ be a maximal plane graph.  A \emph{$4$-block tree}~\cite{jgaa/WangHe12}
$T$ of $G$ is a directed tree containing a vertex for each triangle
of $G$ (refer to Figure~\ref{fig:4block}.a-b). Triangle of $G$ that
corresponds to a vertex $\mu$ of $T$ is denoted by
$\triangle_{\mu}$. $T$ contains edge directed from its vertex $\mu$
to its vertex $\nu$ if $\triangle_{\nu} \prec \triangle_{\mu} $ and
there is no triangle $\triangle$ in $G$ such that $\triangle_{\nu} \prec \triangle \prec \triangle_{\mu}$. We assume that $T$ is rooted
at its source i.e. at the vertex which corresponds to the external
face of $G$. Note that the leaves of $T$ correspond to the internal
faces of $G$. The subgraph of $G$ induced by the vertices of
$\triangle_{\mu}$ and all the vertices of the children of $\mu$ is
called the \emph{pertinent} graph of $\mu$ and is denoted by
$G_{\mu}$ (see Figure~\ref{fig:4blocktree_pert}).

\begin{figure}[htbp]
 \centering
 \subfigure[\label{fig:4blocktree}]
 {\includegraphics[scale=0.6]{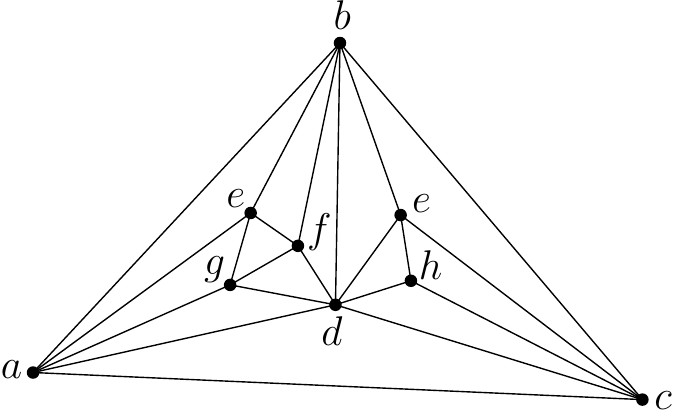}}
 \hfil
 \subfigure[\label{fig:4blocktree_decomp}]
 {\includegraphics[scale=0.6]{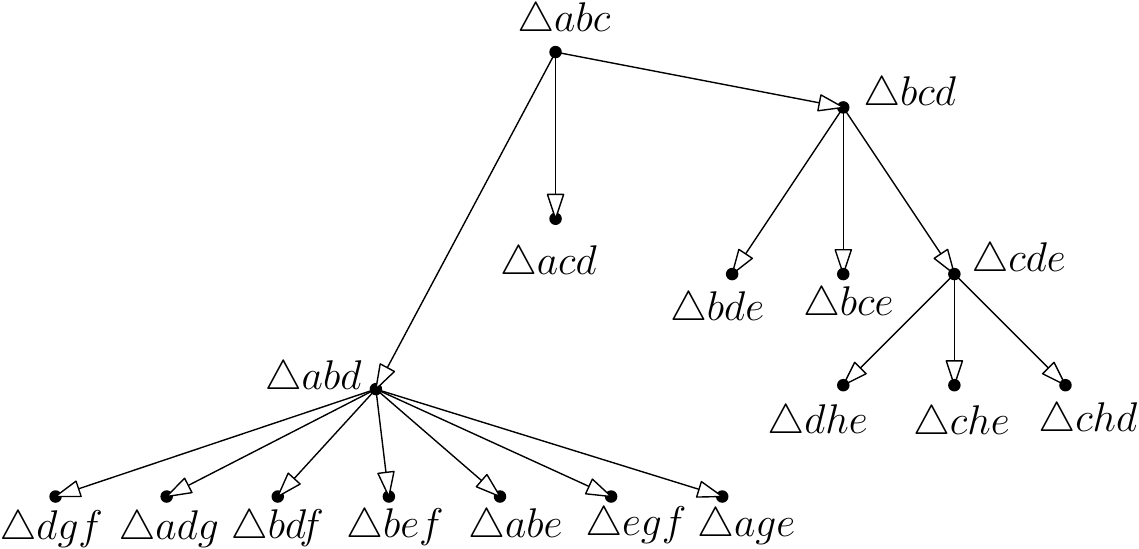}}
 \hfil
 \subfigure[\label{fig:4blocktree_pert}]
 {\includegraphics[scale=0.6]{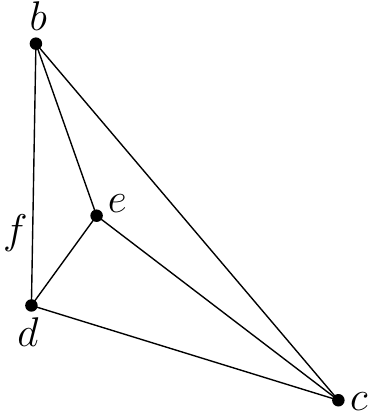}}
 \caption{(a) A plane graph $G$. (b) An extended $4$-block tree of $G$. (c)  Pertinent graph $G_{\mu}$ of node $\mu=\triangle bcd$}
 \label{fig:4block}
\end{figure}

\begin{theorem}\label{th:large-induced-outerpath}
In every $n$-vertex maximal plane graph $G$ there exists a maximal plane subgraph $G'$ that contains an external outerpath $H$ with a side path of size at least $ \left \lceil \frac{\sqrt{\log_2 n}-1}{2}  \right \rceil$. Both $G'$ and $H$ can be computed in $O(n)$-time.
\end{theorem}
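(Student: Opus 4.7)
The plan is to combine Lemmas~\ref{lemma:ordered_triangles} and~\ref{lemma:four_connected} through the $4$-block tree decomposition of $G$. Let $k = \lceil (\sqrt{\log_2 n}-1)/2 \rceil$, and let $T$ be the $4$-block tree of $G$. Each internal node of $T$ corresponds to a separating triangle of $G$, and with each node $\mu$ one can associate a maximal plane subgraph of $G$, namely the region of $G$ bounded by $\triangle_\mu$ and the children triangles of $\mu$ in $T$; since those children triangles are themselves triangular faces of this region, it is indeed a maximal plane subgraph of $G$, and by construction it has no separating triangles, so it is $4$-connected.

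I will distinguish two cases. In the first case, $T$ contains a chain of at least $3k$ nested separating triangles of $G$; Lemma~\ref{lemma:ordered_triangles} applied to this chain yields an external outerpath of $G$ whose side path has size at least $\lceil (3k+2)/3 \rceil \geq k$, and I take $G' = G$. In the second case, some $4$-block $G_\mu$ contains at least $2^{4k}+1$ vertices; Lemma~\ref{lemma:four_connected} applied to the $4$-connected maximal plane graph $G_\mu$ yields an external outerpath of $G_\mu$ whose side path has size at least $\lceil \log_2(2^{4k})/4 \rceil = k$, and I take $G' = G_\mu$.

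The main step is to show that one of these two cases must apply. I plan to argue by contradiction: if $T$ has no chain of $3k$ nested separating triangles and every $4$-block has fewer than $2^{4k}+1$ vertices, then, since the out-degree of each node $\mu$ is bounded by a linear function of $|V(G_\mu)|$ (each child of $\mu$ contributes a distinct triangular hole of $G_\mu$), the total number of vertices of $G$ is bounded by a function of the form $2^{O(k^2)}$, which must be reconciled with the choice of $k$ (the small cases of $n$ are handled separately). Mirsky's theorem, applied to the nesting order on the triangles of $G$, is well suited to sharpening this combinatorial bound when $T$ is wide rather than deep: a wide tree must contain a large antichain of separating triangles, and in the region between this antichain and the triangle $\triangle_\mu$ one can extract a single large $4$-block. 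Tightening the constants so that the dichotomy is forced at the stated threshold $k = \lceil (\sqrt{\log_2 n}-1)/2 \rceil$ is the main obstacle of the proof.

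For the running time, the $4$-block tree can be computed in $O(n)$ time by the algorithm of Wang and He~\cite{jgaa/WangHe12}; identifying which case applies requires a single traversal of $T$, and Lemmas~\ref{lemma:ordered_triangles} and~\ref{lemma:four_connected} are themselves linear-time, so the whole algorithm runs in $O(n)$ time.
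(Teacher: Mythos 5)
Your high-level plan coincides with the paper's: decompose $G$ by its $4$-block tree $T$, apply Lemma~\ref{lemma:ordered_triangles} when $T$ is deep, and Lemma~\ref{lemma:four_connected} when some $4$-block is large. However, the step you defer --- showing that one of your two cases must occur --- is the entire content of the theorem, and your two cases, as stated, are simply not exhaustive; the contradiction you sketch does not close numerically. Suppose every node of $T$ has outdegree at most $2\cdot 2^{4k}$ (which is roughly what ``every $4$-block has fewer than $2^{4k}+1$ vertices'' buys you, since the outdegree of $\mu$ is at most the number of faces of $G_\mu$) and $T$ has depth less than about $3k$. Then the number of leaves of $T$ is at most $\bigl(2\cdot 2^{4k}\bigr)^{3k}=2^{12k^{2}+3k}$, whereas the actual number of leaves is $2n-4$. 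With $k=\lceil(\sqrt{\log_2 n}-1)/2\rceil$ one has $\log_2 n\leq (2k+1)^2=4k^2+4k+1$, hence $2n-4<2^{4k^2+4k+2}\ll 2^{12k^{2}+3k}$: the derived upper bound comfortably exceeds the true leaf count, so no contradiction arises. Nor does Mirsky's theorem rescue the argument: a large antichain of separating triangles under the nesting order need not lie inside a single pertinent graph, so a ``wide'' tree does not hand you one large $4$-connected block. (In the paper, Mirsky's theorem is used only for the one-sided convex result, not here.)

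The repair is to stop fixing both thresholds independently at the target value $k$. Introduce a single parameter $\alpha(n)$ and use the dichotomy ``every node of $T$ has outdegree at most $\alpha(n)$'' versus ``some node has outdegree at least $\alpha(n)+1$,'' which is exhaustive by definition. In the first case, since $T$ has exactly $2n-4$ leaves, its height is at least $\log_{\alpha(n)}(2n-4)$; this yields that many nested separating triangles and, via Lemma~\ref{lemma:ordered_triangles}, a side path of size about $\tfrac{1}{3}\log_{\alpha(n)}(2n-4)$. In the second case the pertinent graph $G_\mu$ has at least $\alpha(n)+1$ faces, hence at least $\lceil(\alpha(n)+1)/2\rceil+2$ vertices, and Lemma~\ref{lemma:four_connected} gives a side path of size about $\tfrac{1}{4}\log_2\alpha(n)$. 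Choosing $\alpha(n)=2^{\sqrt{\log_2 n}}$ balances the two branches at $\Theta(\sqrt{\log_2 n})$, as required. Your running-time analysis (linear-time $4$-block tree, one traversal, linear-time subroutines) is fine and matches the paper's.
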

\begin{proof}
Let $G$ be a $n$-vertex maximal plane graph and let $T$ be its
triangle decomposition tree. Let $\alpha(n)$ be a function of $n$ to
be specified later. We consider two cases based on whether the
degree of $T$ is bounded by $\alpha(n)$ or not.

\begin{description}
\item{\bf The maximum outdegree of $T$ is at most $\alpha(n)$.}
Since $G$ is a maximal plane graph it contains exactly $2n-4$ faces.
Thus, $T$ contains $2n-4$ leaves and therefore, the height of $T$ is
at least $\left \lceil  \log_{\alpha(n)}{(2n-4)} \right \rceil$. This means that $G$ contains a sequence of at least $k=\left \lceil \log_{\alpha(n)}{(2n-4)} \right \rceil+1$ nested separating triangles. By  Lemma~\ref{lemma:ordered_triangles}, $G$ contains an external outerpath $H$ with a side path of size at least $\left \lceil \frac{k+2}{3}\right \rceil =  \left \lceil \frac{\left \lceil \log_{\alpha(n)}{(2n-4)} \right \rceil +3}{3}  \right \rceil$.
\item{\bf There exists a vertex of $T$ with outdegree at least $\alpha(n)+1$.}
Let $\mu$ be such a vertex. Then the pertinent graph $G_{\mu}$ of
$\mu$ contains at least $\alpha(n)+1$ faces. Since $G_{\mu}$ is
maximal planar graph, it contains at least $n_{\mu}=\left \lceil \frac{\alpha(n)+1}{2}
\right \rceil +2 $ vertices. By  Lemma~\ref{lemma:four_connected}, $G_{\mu}$ contains an external outerpath $H$ with a side path of size at least $\left \lceil \frac{\log_2{(n_{\mu}-1)}}{4} \right \rceil = \left \lceil \frac{\log_2{(\left \lceil (\alpha(n)+1)/2 \right \rceil +1)}}{4} \right \rceil$.
\end{description}

\noindent Thus, the size of the side path of $H$ is
$$
h=\min \left\{
 \left \lceil \frac{\lceil \log_{\alpha(n)}{(2n-4)} \rceil +3}{3}  \right \rceil,
 \left \lceil \frac{\log_2{( \lceil (\alpha(n)+1)/2  \rceil +1)}}{4}  \right \rceil
\right \},
$$

By setting $\alpha(n)=2^{\sqrt{\lg{n}}}$, we get that
$h \geq \min \left \{
\left \lceil \frac{ \lceil \sqrt{\log_2 n}+3  \rceil}{3} \right \rceil,
\left \lceil \frac{\sqrt{\log_2{n}-1}}{4}\right \rceil \right \}
= \left \lceil \frac{\sqrt{\log_2{n}-1}}{4}\right \rceil$.

We prove now that $G'$ and $H$ can be computed in $O(n)$ time. First, we compute the Schnyder realizer of $G$ in $O(n)$ time~\cite{Schnyder90}. Then, we compute the longest path
of each of the trees $T_0$, $T_1$, and $T_2$ and take the longest of
the three. If such a path $\pi$ has length at least $\left \lceil
\frac{\sqrt{\log_2{n}-1}}{4}\right \rceil$ we are in the first case
above. The computed path $\pi$ is a path $P_i(u)$ for some internal
vertex $u$ of $G$. $H$ is the subgraph induced by $P_i(u)$ and one
of the other two paths $P_{i-1}(u)$ and $P_{i+1}(u)$ (in this case
$G'$ coincides with $G$).  If $\pi$ is not long enough, we are in
the second case above. In this case we can find $G_\mu$ by computing
the $4$-block tree $T$, by performing a visit of $T$ in order to
find the node with the largest outdegree, and finally by computing
the pertinent graph of $\mu$. All this can be computed in
$O(n)$-time~\cite{jgaa/WangHe12}. Once we have $G_{\mu}$, we need to
compute a rectangular representation of it with the properties
described in the proof of Lemma~\ref{lemma:four_connected} and then
to compute the long stabber. A rectangular representation can be
computed in $O(n)$ time~\cite{DBLP:journals/talg/BuchsbaumGPV08} and
the stabber can also be computed in $O(n)$ time~\cite{Toth08}. In
this case $H$ is the subgraph induced by the path corresponding to
the stabber together with a vertex of the external boundary that is
adjacent to both endvertices of the path (in this case $G'$ is
$G_{\mu}$).
\end{proof}

%
%
%
%

\paragraph{Proof of Theorem~\ref{th:general}.}
Let $S$ be a set of distinct points in the plane such that $|S| \leq \left \lceil
\frac{\sqrt{\log_2 n}-1}{4} \right \rceil$ and let $G$ be an $n$-vertex plane graph. If $G$ is not maximal we can add dummy edges to it in order to make it maximal (these edges will be removed at end from the obtained drawing). Thus, without loss of generality, assume that $G$ is maximal. By Theorem~\ref{th:large-induced-outerpath}, in $G$ there exists a maximal plane subgraph $G'$ of $G$ that contains an external outerpath with a side path of size at least $\left \lceil \frac{\sqrt{\log_2 n}-1}{4} \right \rceil$. If $G'$ coincides with $G$, then, by Lemma~\ref{le:embedding.1.a}, $G$ admits a geometric point-subset embedding on $S$. If $G'$ does not coincide with $G$, then, by Lemma~\ref{le:embedding.1.a}, $G'$ admits a geometric point-subset embedding $\Gamma'$ on $S$. Since $G'$ is maximal plane, the external boundary of $G'$ is a triangle $t$ and it is drawn in the plane as a (geometric) triangle $\tau$. All the other vertices and edges of $G'$ are drawn inside $\tau$. If we remove from $G$ all the internal vertices of $G'$ we obtain a new maximal plane graph $G''$; the triangle $t$ is a separating triangle of $G$ and is an internal face of $G''$. Let $u$ be an internal vertex of $G'$ and let $\T$ be a Schnyder realizer of $G$. Since the three paths $P_0(u)$, $P_1(u)$, and $P_2(u)$ of $G$ are disjoint and $t$ is a separating triangle, each vertex of $t$ belong to exactly one of the three paths $P_0(u)$, $P_1(u)$, and $P_2(u)$, which are induced paths by Lemma~\ref{lemma:Schnyder_induced}. It follows that $G''$ satisfies the conditions of Lemma~\ref{le:embedding.1.b} and therefore it admits a geometric point-subset embedding $\Gamma''$ on $S$ that has $\tau$ has a subdrawing. The union of $\Gamma'$ and $\Gamma''$ is a geometric point-subset embedding of $G$ on $S$.

By Theorem~\ref{th:large-induced-outerpath}, Lemma~\ref{le:embedding.1.a}, and Lemma~\ref{le:embedding.1.b}, the time complexity of the drawing algorithm is $O(k \log k + n)$ where $k$ is the size of $S$. Since $k \leq \left \lceil \frac{\sqrt{\log_2 n}-1}{4} \right \rceil$, we have that $O(k \log k)=O(n)$ which concludes the proof of theorem~\ref{th:general}.

Let $F(n)$ be a function such that for every point set $S$ of size $F(n)$, any planar graph admits a geometric point-subset embedding on $S$. Angelini et al. proved that $4 \leq F(n) \leq 2 \lceil \frac{n}{3} \rceil+2$, and ask whether $\lim_{n \rightarrow + \infty} F(n) = + \infty$. An immediate consequence of Theorem~\ref{th:general} is the following.


\begin{corollary}\label{co:infty}
$\lim_{n \rightarrow + \infty} F(n) = + \infty$.
\end{corollary}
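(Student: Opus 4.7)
The plan is to read the corollary as an immediate quantitative consequence of Theorem~\ref{th:general}. By the definition of $F(n)$, to show $F(n) \to +\infty$ it suffices to exhibit, for every positive integer $M$, an integer $n_0 = n_0(M)$ such that every $n$-vertex planar graph with $n \geq n_0$ admits a geometric point-subset embedding on any point set of size $M$. Equivalently, we need a lower bound on $F(n)$ that tends to infinity.

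First I would invoke Theorem~\ref{th:general}, which guarantees that for every $n$-vertex plane graph $G$ and every point set $S$ with $|S| \leq \left\lceil \frac{\sqrt{\log_2 n}-1}{4}\right\rceil$, $G$ admits a geometric point-subset embedding on $S$. Since the theorem applies uniformly to all $n$-vertex planar graphs, this directly yields the bound
\[
F(n) \geq \left\lceil \frac{\sqrt{\log_2 n}-1}{4}\right\rceil.
\]
The right-hand side is a monotone nondecreasing function of $n$ that diverges as $n \to +\infty$, since $\sqrt{\log_2 n} \to +\infty$.

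Given $M \in \mathbb{N}$, I would then pick $n_0$ large enough that $\left\lceil \frac{\sqrt{\log_2 n_0}-1}{4}\right\rceil \geq M$, for instance any $n_0 \geq 2^{(4M+1)^2}$. For all $n \geq n_0$ we obtain $F(n) \geq M$, which by definition of a limit gives $\lim_{n \to +\infty} F(n) = +\infty$. There is no real obstacle here; the only thing to verify is that the bound in Theorem~\ref{th:general} indeed holds uniformly over the class of all $n$-vertex plane graphs (which it does, since the drawing algorithm works for any planar $G$ after maximal augmentation), and that the square-root-of-logarithm expression is unbounded, which is immediate.
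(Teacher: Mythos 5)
Your proposal is correct and matches the paper's argument: the paper states the corollary as an immediate consequence of Theorem~\ref{th:general}, which is precisely the lower bound $F(n) \geq \left\lceil \frac{\sqrt{\log_2 n}-1}{4}\right\rceil$ you derive, and the divergence of this expression gives the limit. Your write-up merely makes explicit the $\varepsilon$--$n_0$ bookkeeping that the paper leaves implicit.
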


Furthermore, Theorem~\ref{th:large-induced-outerpath} implies the following corollary which is, for maximal planar graphs, an improvement over the result by Aroha and Valencia~\cite{ArohaV00} who showed that every $3$-connected planar graph with $n$ vertices and maximum vertex degree $d$ contains an induced path of length $\Omega(\sqrt{\log_3{d}})$.

\begin{corollary}\label{co:path}
Let $G$ be an $n$-vertex maximal plane graph. Then $G$ has an induced path with at least $\left \lceil \frac{\sqrt{\log_2 n}-1}{4} \right \rceil$ vertices.
\end{corollary}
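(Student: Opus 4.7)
The plan is to obtain the corollary as a nearly immediate consequence of Theorem~\ref{th:large-induced-outerpath} together with the preliminary observation that the side paths of an outerpath are induced paths. Specifically, Theorem~\ref{th:large-induced-outerpath} produces a maximal plane subgraph $G'\subseteq G$ and an external outerpath $H$ of $G'$ having a side path $P$ with at least $\left\lceil\frac{\sqrt{\log_2 n}-1}{4}\right\rceil$ vertices, and the goal is to argue that $P$, regarded as a subgraph of $G$, is induced.

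First I would recall from Section~\ref{se:preliminaries} that each of the two side paths of an outerpath is an induced path of that outerpath: every edge of an outerpath that is not on its external boundary has one endpoint in each of the two side paths, so a single side path contains no chord. Applied to $H$, this shows that $P$ is an induced subgraph of $H$.

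Next I would push induced-ness up the chain $P\subseteq H\subseteq G'\subseteq G$. The outerpath $H$ is obtained in Theorem~\ref{th:large-induced-outerpath} either via Lemma~\ref{lemma:Schnyder_outerpath}, where $H_{ij}$ is defined as the subgraph \emph{induced} by $P_i(v)\cup P_j(v)$, or via Lemma~\ref{lemma:four_connected}, where $H$ is likewise taken to be the induced subgraph on the cycle $C$; in either case $H$ is induced in $G'$. Similarly $G'$ is either equal to $G$ or is a pertinent graph $G_\mu$ of the $4$-block tree, which is induced in $G$ by definition. Since the induced-subgraph relation is transitive (if edges of $G$ on $V(G')$ equal edges of $G'$, and edges of $G'$ on $V(H)$ equal edges of $H$, then edges of $G$ on $V(P)\subseteq V(H)$ equal edges of $H$ on $V(P)$, which in turn are the path edges of $P$), we conclude that $P$ is an induced path of $G$ with at least $\left\lceil\frac{\sqrt{\log_2 n}-1}{4}\right\rceil$ vertices, as required.

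There is no substantive obstacle here; the whole content lies in Theorem~\ref{th:large-induced-outerpath}. The only point requiring a moment of care is verifying the two-step transitivity of induced-ness, which is straightforward from the definition given in Section~\ref{se:preliminaries}.
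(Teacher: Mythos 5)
Your proposal is correct and matches the paper's (implicit) justification: the paper simply notes that Corollary~\ref{co:path} follows from Theorem~\ref{th:large-induced-outerpath}, and the chain you make explicit --- the side path is induced in the outerpath $H$ by the observation in Section~\ref{se:preliminaries}, $H$ is induced in $G'$ by construction in Lemmas~\ref{lemma:Schnyder_outerpath} and~\ref{lemma:four_connected}, $G'$ is $G$ or an induced pertinent graph $G_\mu$, and induced-ness is transitive --- is exactly the intended argument.
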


\section{One-sided Convex Point Sets}\label{se:convex}

A \emph{one-sided convex point set $S$} is a set of points in the plane that are in convex position and can be rotated so that the leftmost point and the rightmost point are adjacent in the convex hull of $S$. This restriction on the point set allows us to increase the size of the set itself.
The structure of the proof of Theorem~\ref{th:convex} is similar to that of Theorem~\ref{th:general}. 

\subsection{Drawing tool}\label{ss:convex-embedding}

\begin{figure}
\centering
\includegraphics[scale=0.5]{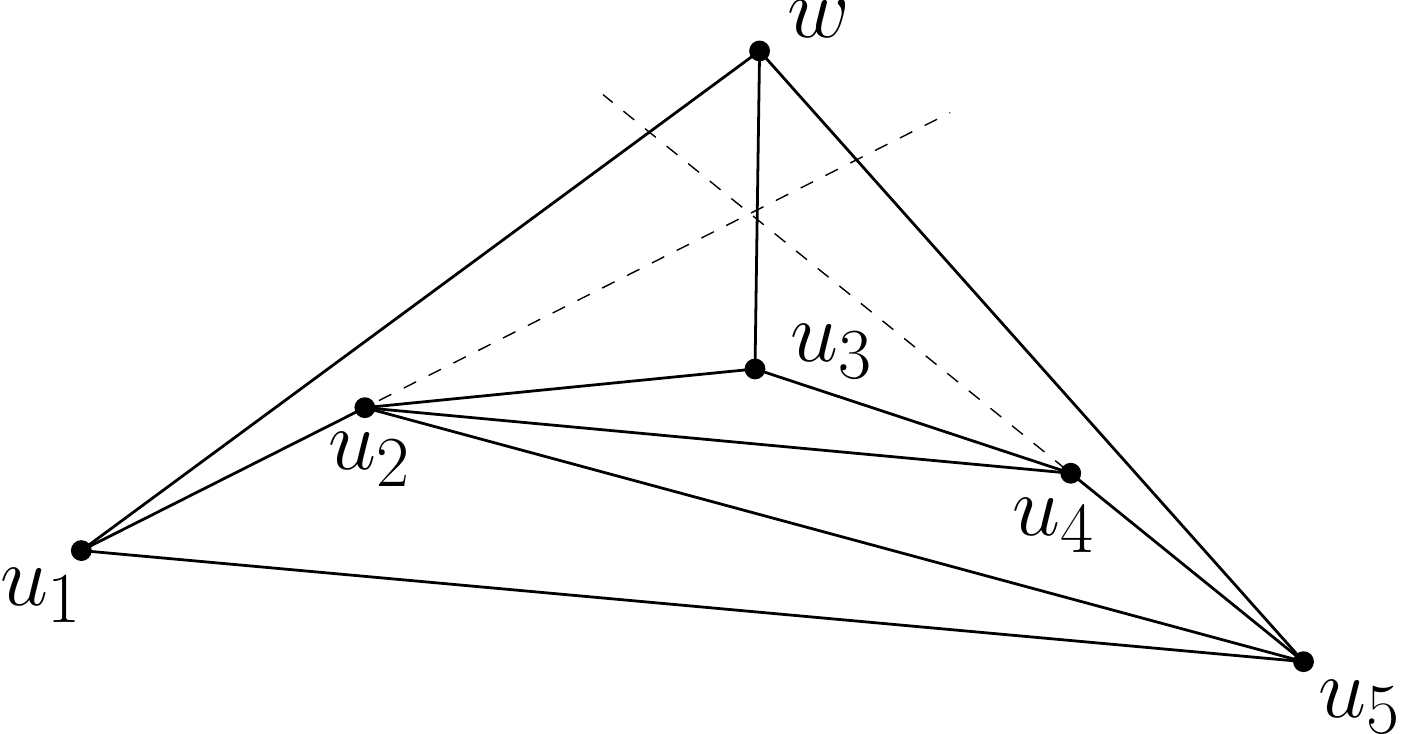}
\caption{\label{fi:drawing3}Illustration of the drawing technique of Lemma~\ref{le:embedding.2}.}
\end{figure}

\begin{lemma}\label{le:embedding.2}
Let $S$ be a one-sided convex point set with cardinality $k$. Let $G$ be an
$n$-vertex maximal plane graph that contains an induced biconnected outerplanar subgraph with at least $k$ vertices and one edge on the external boundary of $G$. Then $G$ admits a geometric point-subset embedding on $S$, which can be computed in $O(k \log k + n)$ time.
\end{lemma}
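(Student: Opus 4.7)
The plan is to mirror the strategy of Lemma~\ref{le:embedding.1.a} while exploiting the one-sided convex position of $S$ to accommodate a biconnected (rather than path-dual) outerplanar subgraph $H$. First I would rotate and translate $S$ so that $p_1$ is leftmost, $p_k$ is rightmost, both lie on the line $y=0$, and $p_2,\ldots,p_{k-1}$ lie strictly above; if $|V(H)|>k$, I would augment $S$ with additional points on the upper convex chain so that $|S|=|V(H)|$ while preserving the one-sided-convex property. Let $u_1 u_2 \cdots u_k$ be the Hamiltonian cycle of $H$ traversed so that $(u_1,u_k)$ is the edge of $H$ on the external boundary of $G$, and map $u_i \mapsto p_i$.

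With this mapping, the Hamiltonian cycle of $H$ is drawn as the boundary of the convex hull of $S$, the edge $(u_1,u_k)$ becomes the bottom hull edge $\overline{p_1 p_k}$, and every chord of $H$ becomes a diagonal of the polygon, so every internal face of $H$ is a convex (hence star-shaped) polygon. Next I would place the third vertex $w$ of $G$'s external triangle at a point $(c,Y)$ above the polygon, where $c$ is the midpoint $x$-coordinate of $\overline{p_1 p_k}$ and $Y$ is large enough that the triangle $p_1 p_k w$ strictly contains the convex hull of $S$; for sufficiently large $Y$ the straight segment from $w$ to each $p_i$ avoids every edge of $H$, since the segments to the upper-chain points $p_2, \ldots, p_{k-1}$ go almost straight down while the segments to $p_1$ and $p_k$ slip through the left and right gaps just beside the polygon. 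Any remaining vertices of $G$ lying in the outer face of $H$ would be placed inside the annular region between the polygon and the triangle $p_1 p_k w$ using a fan of rays emanating from $w$, in the spirit of the steep-slope placement of the second side path in Lemma~\ref{le:embedding.1.a}.

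After these placements every face of the partial drawing is bounded by a star-shaped polygon: convex polygons for the internal faces of $H$, and triangular (or otherwise star-shaped) regions inside the annulus. As in the proof of Lemma~\ref{le:embedding.1.a}, the subgraph of $G$ induced by the boundary of each not-yet-drawn face together with the vertices inside it is internally triangulated and $3$-connected, since any chord across its boundary would have already been drawn. Theorem~\ref{theorem:HongNagamochi} will then extend the partial drawing face-by-face into a planar straight-line drawing of all of $G$ preserving the embedding. The hard part will be the placement of $w$ and the fan for $H$'s outer face: one must verify that the outer-region vertices can be placed so that the rotation system around each of them matches $G$'s embedding and that no added edge crosses the polygon, which will rely on the maximality of $G$ (which triangulates the annulus into small faces) and on the one-sided-convex position of $S$ (which leaves enough room above the polygon). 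Sorting $S$ along the convex hull takes $O(k\log k)$ time and the rest of the construction runs in $O(n)$, yielding the claimed $O(k\log k+n)$ time bound.
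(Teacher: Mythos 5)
Your construction is essentially the paper's: map the Hamiltonian cycle of $H$ to the convex polygon so that the edge of $H$ on the external boundary of $G$ becomes the bottom hull edge, place the remaining outer vertex $w$ above the upper chain where it sees all of $p_1,\dots,p_k$, observe that the internal faces of $H$ are convex, and finish with Theorem~\ref{theorem:HongNagamochi}. The one point where you diverge---the ``fan of rays from $w$'' used to place the vertices of $G$ lying in the outer face of $H$---is exactly the step you yourself flag as unverified, and as written it is a genuine gap: those vertices induce an arbitrary internally triangulated plane graph between the cycle $C$ and the outer triangle, not a path, so there is no reason a radial placement along rays from $w$ realizes the correct rotation system or avoids crossings, and no argument is given. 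But the step is also unnecessary. Once you have drawn the segments from $w$ to its neighbours in $H$ (crossing-free as soon as $w$ lies above both the line through $p_1p_2$ and the line through $p_{k-1}p_k$, which your choice of $(c,Y)$ with $Y$ large achieves), the annular region is partitioned into faces of the subgraph induced by $V(H)\cup\{w\}$, each bounded by two segments incident to $w$ and a convex sub-chain of the polygon visible from $w$; these faces are star-shaped, so each of them, together with the vertices of $G$ inside it, is completed by Theorem~\ref{theorem:HongNagamochi} in exactly the same way as the internal faces of $H$. Dropping the fan and treating the annulus uniformly in this manner closes the gap and yields precisely the paper's proof, with the same $O(k\log k+n)$ running time.
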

\begin{proof}
If necessary, rotate $S$ so that the leftmost and the rightmost point are adjacent in the convex hull of $S$. Let $p_1,p_2,\dots,p_k$ be the points of $S$ in the order they appear along the convex hull of $S$, with $p_1$ being the leftmost point and $p_k$ being the rightmost point. Without loss of generality assume that each point $p_i$ ($1 < i <k$) is above segment $\seg{p_1p_k}$. Let $H$ be the induced biconnected outerplanar subgraph of $G$ and let $C$ be external boundary of $H$. Without loss of generality assume that $H$ has exactly $k$ vertices (if not, we can augment $S$ with additional points so to maintain $S$ to be a one-sided convex position). So without loss of generality assume that $H$ has exactly $k$ vertices. Let $P$ be the path obtained from $C$ by removing the edge of $H$ that is on the external boundary of $G$. Let $u_1, u_2, \dots, u_k$ be the vertices of $P$ in the order they appear along $P$. Map $u_i$ to $p_i$ and draw each edge $(u_i,u_{i+1})$ as the segment $\seg{p_ip_{i+1}}$ (see Figure~\ref{fi:drawing3} for an illustration of the drawing technique). Also add every other edge connecting two of the vertices of $P$. Since $S$ is convex position, it is immediate to see that the resulting drawing has no crossings. Let $w$ be the outer vertex of $G$ that does not belong to $H$. Let $\rho_1$ be the straight line containing the segment $\seg{p_1p_2}$ and let $\rho_k$ be the straight line containing the segment $\seg{p_{k-1}p_{k}}$. Chose a point $p$ that is above $\rho_1$ and above $\rho_k$, place $w$ on $p$, and draw all edges connecting $w$ to the vertices of $H$. It is easy to see that the obtained drawing $\Gamma'$ has no crossing. Hence $\Gamma'$ is a planar geometric embedding on $S$ of the graph induced by the vertices in $H$ plus $w$. Moreover, all the faces of the drawing are star-shaped polygons. In the planar embedding of $G$ each vertex not yet drawn is inside one of the cycles represented by the faces of the computed drawing. Thus, the drawing can be completed by using Theorem~\ref{theorem:HongNagamochi} as described in the proof of Lemma~\ref{le:embedding.1.a}. Similar to Lemma~\ref{le:embedding.1.a}, the time spent to compute the drawing is $O(n)$, once the points are sorted. Thus, the time complexity is $O(k \log k +n)$.
\end{proof}

\subsection{Large Induced Biconnected Outerplanar Subgraph}\label{ss:convex-induced}

We now show that any $n$-vertex maximal plane graph contains an induced biconnected outerplanar subgraph with at least $ \lceil
\sqrt[3]{n} \rceil$ vertices and one edge on the external boundary of $G$. To this aim we will use the Schnyder realizer to define three different partial orders on the vertices of $G$.

Let $G$ be a maximal plane graph equipped with a Schnyder realizer $\T$. For each $i=1,2,3$, we define a partial order on $V(G)$ by constructing a graph $G_i$ by means of the following operations\footnote{Although defined in a different way, this graph is the same as the Frame graph defined in~\cite{BDHLMW-09}} (see Figure~\ref{fi:Gi} for an illustration).
\begin{itemize}
\item We first remove from $G$ all the internal edges with color different from $i$ and $i+1$;
\item We then reverse the orientation of the edges with color $i$;
\item We color the outer edge $(r_i,r_{i+1})$ with color $i$ and orient it from $r_i$ to $r_{i+1}$;
\item We color the outer edge $(r_i,r_{i-1})$ with color $i$ and orient it from $r_i$ to $r_{i-1}$;
\item We color the outer edge $(r_{i-1},r_{i+1})$ with color $i+1$ and orient it from $r_{i-1}$ to $r_{i+1}$.
\end{itemize}

The graph $G_i$ is a directed acyclic graph with one source $r_i$ and one sink $r_{i+1}$. Thus, it defines a partial order $\prec_{i}$ for the vertices of $G$; namely, $u \prec_{i} v$ if there is a directed path from $u$ to $v$ in $G_i$. Given a set $X$ with a partial order $\prec$ (also called a \textit{partial ordered set}), a subset $X'$ of $X$ such that every two elements in $X'$ are comparable with respect to $\prec$ is called a \textit{chain} of $X$. A subset $X''$ of $X$ such that no two elements in $X''$ are comparable with respect to $\prec$ is called an \textit{antichain} of $X$. According to the definition of $\prec_{i}$, a chain $\chi$ of $V(G)$ with respect to $\prec_{i}$ corresponds to a oriented path $\pi$ in $G_i$. We say that $\pi$ is the oriented path of $G_i$ \textit{associated with} $\chi$. We prove now some properties of the partial orders $\prec_i$.

\begin{figure}
\centering
\subfigure[$G_1$]{\includegraphics[width=0.32\columnwidth]{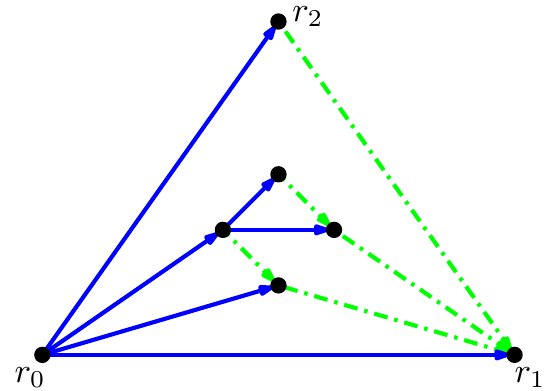}\label{fi:g1}}
\subfigure[$G_2$]{\includegraphics[width=0.32\columnwidth]{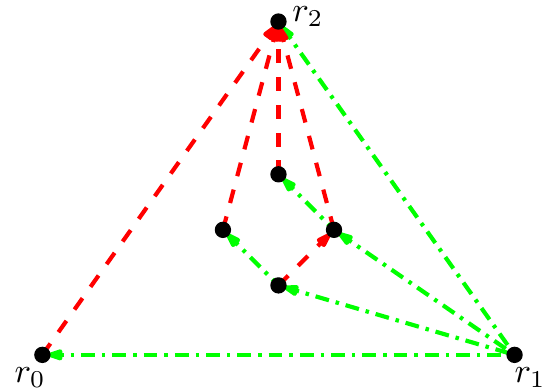}\label{fi:g2}}
\subfigure[$G_3$]{\includegraphics[width=0.32\columnwidth]{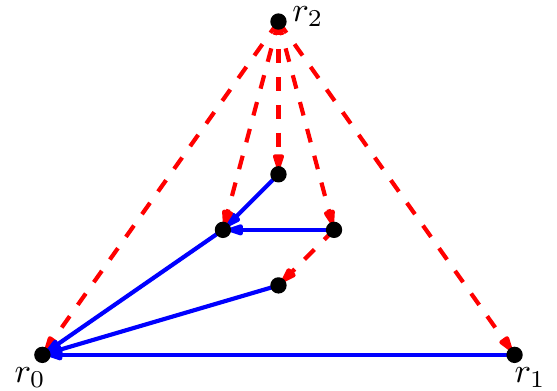}\label{fi:g3}}
\caption{The three directed graphs used to define the three partial orders $\prec_{1}$, $\prec_{2}$, and $\prec_{3}$.\label{fi:Gi}}
\end{figure}

\begin{lemma}\label{le:comparability}
Let $G$ be a maximal plane graph equipped with a Schnyder realizer $\T$. Let $u$ and $v$ be two internal vertices of $G$. Then $u$ and $v$ are comparable by at least one of the three partial orders $\prec_{1}$, $\prec_{2}$, and $\prec_{3}$.
\end{lemma}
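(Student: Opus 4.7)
The plan is to locate $v$ relative to the three induced paths $P_1(u)$, $P_2(u)$, $P_3(u)$ supplied by Lemma~\ref{lemma:Schnyder_induced}: since $u\neq v$ are both internal, $v$ lies either on one of these paths, or strictly inside one of the regions $R_1(u)$, $R_2(u)$, $R_3(u)$ (reading the indices modulo $3$). These two situations will be handled separately, and in each of them I will exhibit an index $j$ such that there is a directed walk from one of $\{u,v\}$ to the other in $G_j$.

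If $v$ lies on $P_j(u)$ for some $j$, then $v$ is an ancestor of $u$ in $T_j$, because $P_j(u)$ is oriented from $u$ toward $r_j$. In $G_j$ the tree $T_j$ appears with all its edges reversed, so the $v$-to-$u$ portion of $T_j$ becomes a directed walk from $v$ to $u$ in $G_j$; hence $v\prec_j u$ and this easy case is done.

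The main case is $v\in R_i(u)$. The idea is to trace $P_i(v)$: it starts at $v$, which is strictly inside the bounding cycle $C_i=P_{i-1}(u)\cup P_{i+1}(u)\cup (r_{i-1},r_{i+1})$ of $R_i(u)$, and it must terminate at $r_i$, which lies outside $C_i$. By planarity $P_i(v)$ meets $C_i$ at some first vertex $w$. Recall from the excerpt that each tree $T_k$ contains every internal vertex and exactly one outer vertex, namely $r_k$; therefore neither $r_{i-1}$ nor $r_{i+1}$ can be an intermediate vertex of the $T_i$-path $P_i(v)$, and $w$ cannot lie in the interior of the outer edge $(r_{i-1},r_{i+1})$. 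Consequently $w\in\{u\}\cup\bigl(P_{i-1}(u)\setminus\{r_{i-1}\}\bigr)\cup\bigl(P_{i+1}(u)\setminus\{r_{i+1}\}\bigr)$ is an internal vertex which is simultaneously an ancestor of $v$ in $T_i$ and, depending on the side on which it lies, an ancestor of $u$ in $T_{i+1}$ or in $T_{i-1}$ (treating $w=u$ as trivially lying on $P_{i+1}(u)$).

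To conclude I split according to which boundary path contains $w$. If $w\in P_{i+1}(u)$, the colour-$(i+1)$ edges keep their original orientation in $G_i$, so one walks from $u$ up to $w$ along $P_{i+1}(u)$; since $T_i$ is reversed in $G_i$, one then continues from $w$ down to $v$ along $T_i$, producing a directed walk $u\to v$ in $G_i$, i.e.\ $u\prec_i v$. Symmetrically, if $w\in P_{i-1}(u)$, one works in $G_{i-1}$: colour $i$ keeps its original orientation there, so one walks from $v$ up to $w$ along $P_i(v)$ and from $w$ down to $u$ along the reversed $T_{i-1}$, yielding $v\prec_{i-1} u$. The main obstacle is the planar-topology argument that forces $P_i(v)$ to exit $R_i(u)$ through an internal vertex of $\{u\}\cup P_{i-1}(u)\cup P_{i+1}(u)$ rather than escape through the outer boundary; this is precisely where the structural property of the Schnyder realizer, namely that each outer root $r_j$ belongs only to the single tree $T_j$, is essential.
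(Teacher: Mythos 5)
Your proof is correct and follows essentially the same route as the paper's: you split on whether $v$ lies on a path $P_j(u)$ or strictly inside a region $R_i(u)$, trace the colour-$i$ path out of the region to a first boundary vertex $w$, and concatenate appropriately (re)oriented subpaths inside $G_i$ or $G_{i-1}$ — the paper does exactly this with the roles of $u$ and $v$ interchanged. Your justification that the exit vertex $w$ cannot be $r_{i-1}$, $r_{i+1}$, or interior to the outer edge is in fact slightly more explicit than the paper's.
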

\begin{proof}
If $u$ belongs to a $i$-path staring at $v$ for some $i$ ($1 \leq i \leq 3$), then $u$ and $v$ are comparable with respect to both $\prec_{i}$ and $\prec_{i-1}$ (indices modulo $3$). Namely, the concatenation of the reversed version of $P_i(v)$ with $P_{i+1}(v)$ is an oriented path of $G_i(v)$ containing both $u$ and $v$, and therefore $u$ and $v$ are comparable with respect to $\prec_i$. Analogously, the concatenation of the reversed version of $P_{i-1}(v)$ with $P_{i}(v)$ is an oriented path of $G_{i-1}(v)$ containing both $u$ and $v$, and therefore $u$ and $v$ are comparable with respect to $\prec_{i-1}$. Clearly, the same argument applies if $v$ belongs to a $i$-path staring at $u$ for some $i$ ($1 \leq i \leq 3$).

Consider the case when $u$ does not belong to any $i$-path staring at $v$ and viceversa. It follows that $u$ must be in one region $R_i(v)$ ($1 \leq i \leq 3$). Consider the $i$-path starting at $u$, $P_i(u)$. Since this path does not contain $v$, it must share a vertex $w$ with either $P_{i-1}(v)$ or $P_{i+1}(v)$ (see Figure~\ref{fi:comparability}). Suppose that $w$ belongs to $P_{i-1}(v)$, the case when it belongs to $P_{i+1}(v)$ is analogous. Consider the path $P$ consisting of the concatenation of: (i) the path $P_{i-1}(u)$; (ii) the subpath of $P_i(u)$ from $u$ to $w$; (iii) the subpath of $P_{i-1}(v)$ from $w$ to $v$; (iv) the path $P_i(v)$. All edges of $P$ colored $i-1$ are oriented towards $r_{i-1}$, while those colored $i$ are oriented towards $r_i$ (see Figure~\ref{fi:comparability2}). Thus reversing the edges of $P$ colored $i-1$, we obtain an oriented path $P'$ that is an oriented path of $G_{i-1}$ and that  contains both $u$ and $v$. Thus, $u$ and $v$ are comparable with respect to $\prec_{i-1}$.
\end{proof}

\begin{figure}
\centering
\subfigure[]{\includegraphics[width=0.32\columnwidth]{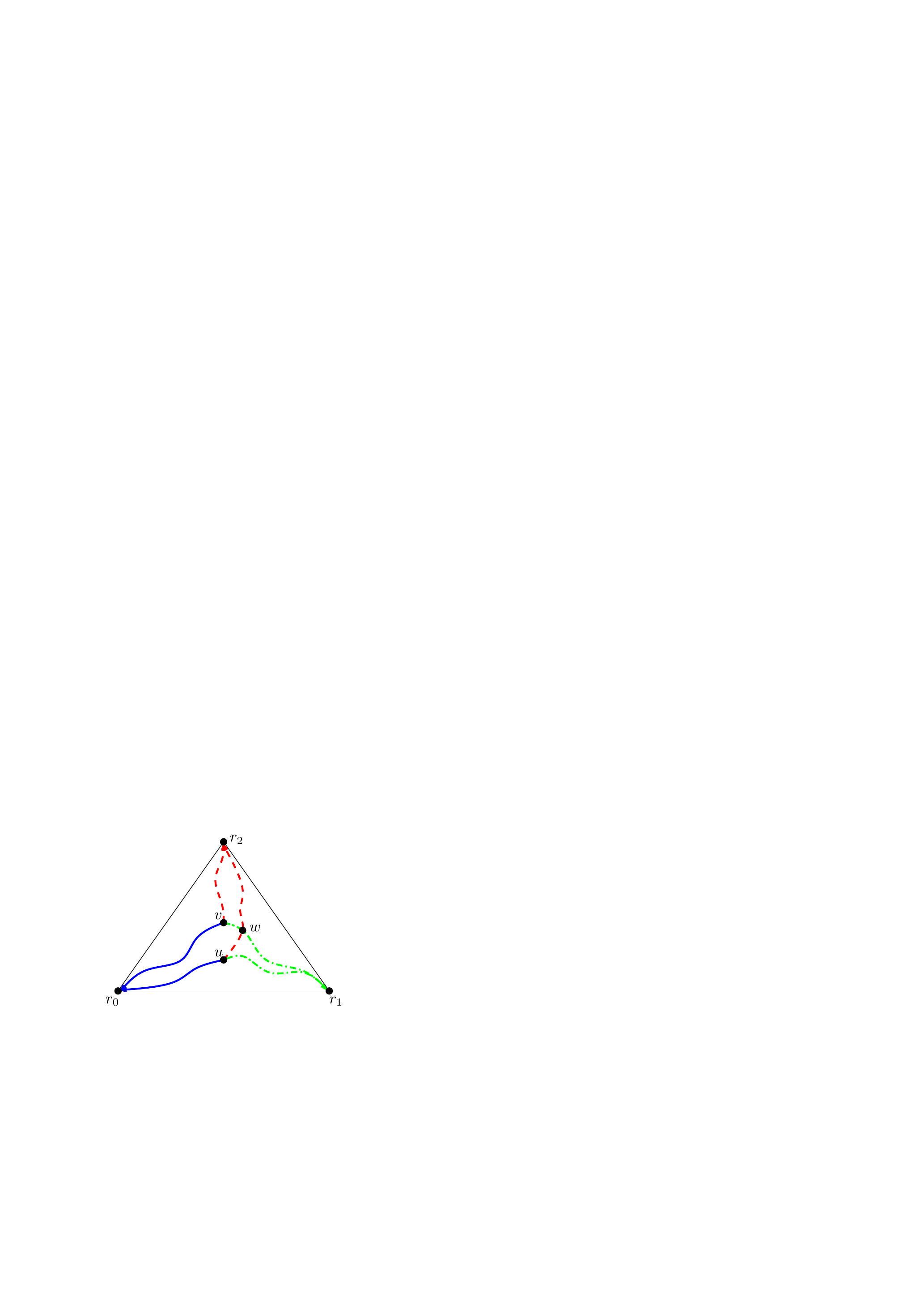}\label{fi:comparability}}
\hfil
\subfigure[]{\includegraphics[width=0.32\columnwidth]{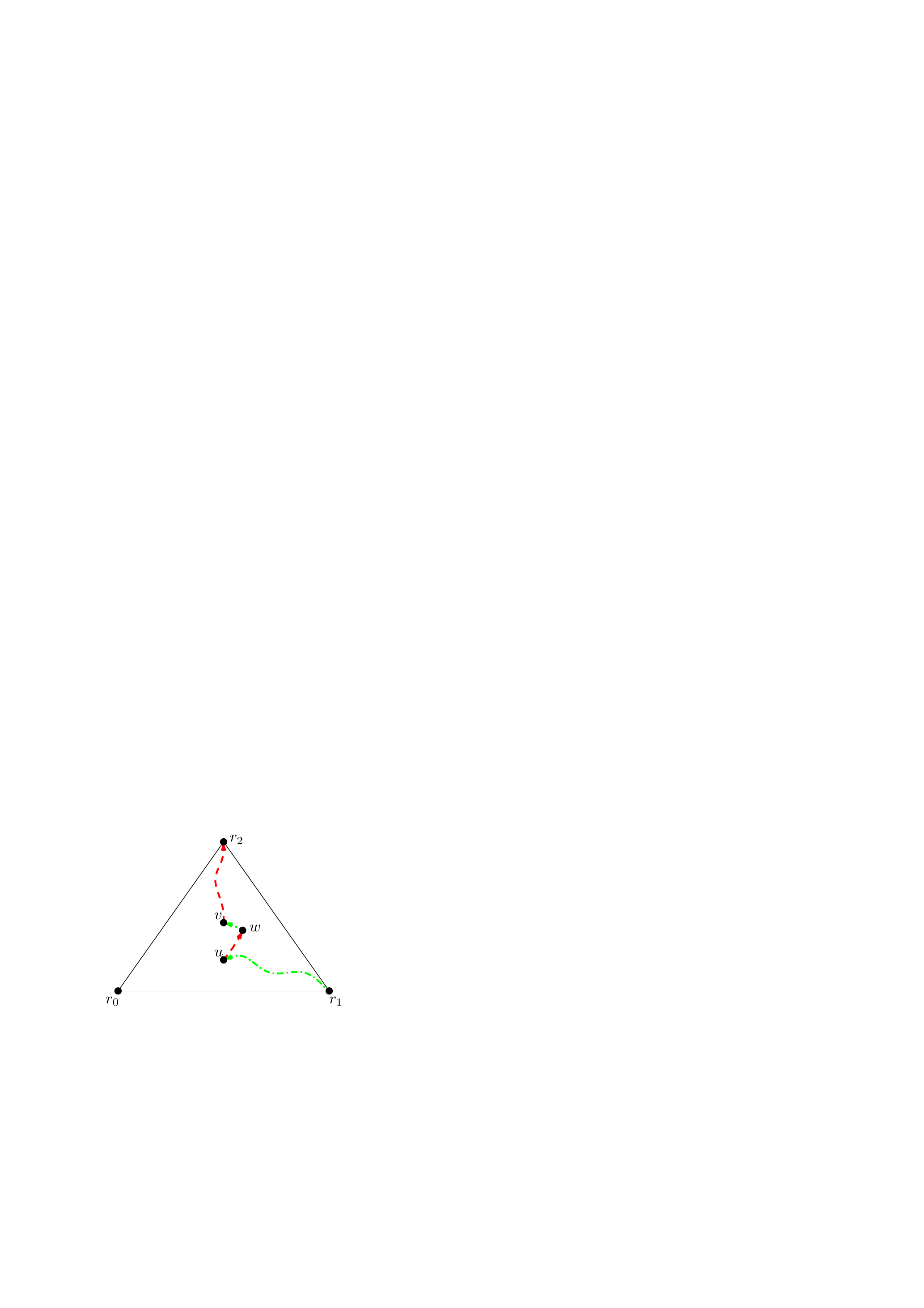}\label{fi:comparability2}}
\caption{An illustration for the proof of Lemma~\ref{le:comparability}. (a) The path $P_3(u)$ crosses the path $P_2(v)$. (b) The oriented path containing $u$ and $v$ in $G_2$. }
\end{figure}

\begin{lemma}\label{le:comparability2}
Let $G$ be a maximal plane graph equipped with a Schnyder realizer $\T$. Let $u$ be an internal vertex of $G$ and let $v$ be another vertex of $G$ such that $v \in R_{i-1}(u)$. Then $u$ and $v$ are not comparable with respect to $\prec_i$.
\end{lemma}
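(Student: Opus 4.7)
The plan is to prove the lemma by contradiction. I would suppose that $u$ and $v$ are comparable in $\prec_i$, so a directed path in $G_i$ runs between them, and derive a contradiction by a local analysis of Schnyder's CCW edge order at every vertex on the boundary of $R_{i-1}(u)$. The key input is the consequence of the CCW pattern (out~$0$, in~$2$, out~$1$, in~$0$, out~$2$, in~$1$) around any internal vertex: the incoming color-$j$ edges at that vertex lie inside the region $R_j$ of that vertex. In $G_i$ only color-$i$ and color-$(i+1)$ edges survive, with the color-$i$ edges reversed, so edges into $R_{i-1}$ (which would be colored $i-1$) are entirely absent at $u$.

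First I would observe that the cycle $P_i(u)\cup P_{i+1}(u)\cup(r_i,r_{i+1})$ bounding $R_{i-1}(u)$ decomposes in $G_i$ into two directed $r_i\to r_{i+1}$ paths -- one through $u$ (namely $P_i(u)$ reversed followed by $P_{i+1}(u)$) and one direct via the outer edge. Next I would verify the crucial local fact at $u$: the outgoing $G_i$-edges at $u$ are either the first edge of $P_{i+1}(u)$ or the reversed incoming color-$i$ edges (which point into $R_i(u)$), and the incoming $G_i$-edges at $u$ are either the reversed first edge of $P_i(u)$ or the incoming color-$(i+1)$ edges (which come from $R_{i+1}(u)$). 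In particular no $G_i$-edge at $u$ connects $u$ to the strict interior of $R_{i-1}(u)$.

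For the direction $u\prec_i v$, I would use planarity: a $G_i$-path from $u$ to the interior vertex $v$ must cross into the interior along some edge from a boundary vertex $w$ to an interior vertex. I would then rule out every possibility for $w$. The case $w=u$ is the local fact above. The case $w\in P_i(u)\setminus\{u\}$ fails because $P_i(u)$ is oriented $r_i\to u$ in $G_i$, making such $w$ a $G_i$-ancestor of $u$, which cannot be reached from $u$ in the acyclic $G_i$. The case $w\in P_{i+1}(u)\setminus\{u,r_{i+1}\}$ follows from the same CCW analysis as at $u$: the outgoing $G_i$-edges at $w$ continue along $P_{i+1}(u)$ or enter $R_i(u)$, never the interior of $R_{i-1}(u)$. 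Finally $w\in\{r_i,r_{i+1}\}$ is excluded because $r_i$ is the source of $G_i$ (not reachable from $u$) and $r_{i+1}$ the sink (no outgoing edges).

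The hardest part is the direction $v\prec_i u$, because the analogous case analysis leaves $w\in P_{i+1}(u)\setminus\{u,r_{i+1}\}$ open: the incoming $G_i$-edges at such $w$ can legitimately come from $R_{i-1}(u)$'s interior (e.g.\ the reversed outgoing color-$i$ edge whose $T_i$-parent lies in $R_{i-1}(u)$). To close this case I would continue the path past $w$ and show that from $w$ one cannot reach $u$ in $G_i$: the outgoing $G_i$-edges at $w$ either head toward the sink $r_{i+1}$ along $P_{i+1}(u)$ or lead into $R_i(u)$, so the path must enter $R_i(u)$; and $R_i(u)$ is a dead end for $u$ in $G_i$ because $u$'s incoming $G_i$-edges come only from $P_i(u)$ or $R_{i+1}(u)$, while from $R_i(u)$ one can reach neither (the former is separated from $R_i(u)$ by $u$ itself in the planar embedding, and the boundary $P_{i-1}(u)$ between $R_i(u)$ and $R_{i+1}(u)$ is a one-way conduit from $R_{i+1}(u)$ into $R_i(u)$ in $G_i$ -- a CCW check at $P_{i-1}(u)$-vertices shows their incoming $G_i$-edges come only from $R_{i+1}(u)$ while their outgoing edges go only into $R_i(u)$). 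This one-way conduit observation is the crux of the proof.
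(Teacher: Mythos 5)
Your proposal is correct, and it rests on the same key fact as the paper's proof: the counterclockwise Schnyder pattern at an internal vertex shows that the only edges of $u$ pointing into $R_{i-1}(u)$ are incoming edges of color $i-1$, which do not survive in $G_i$, and more generally it pins down, at every vertex of $P_i(u)$, $P_{i+1}(u)$ and $P_{i-1}(u)$, the $G_i$-orientation of the edges on each side of that path. Where you genuinely diverge is in the choice of separating cycle. The paper takes $w_1\in P_i(u)\cap P_i(v)$ and $w_2\in P_{i+1}(u)\cap P_{i+1}(v)$ and works with the cycle formed by the two directed $w_1$-to-$w_2$ paths of $G_i$, one through $u$ and one through $v$; it shows a hypothetical $v$-to-$u$ path can only enter, never exit, this cycle, so it would have to reach $u$ from inside it, which the local pattern at $u$ forbids, and the opposite direction is dispatched by symmetry. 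You instead work with the boundary $P_i(u)\cup P_{i+1}(u)\cup(r_i,r_{i+1})$ of $R_{i-1}(u)$ itself, which makes the direction $u\prec_i v$ immediate (no $G_i$-edge crosses into the region from $u$ or from $P_i(u)$, and $r_i$, $r_{i+1}$ are source and sink) but forces you, for $v\prec_i u$, to follow the path after it escapes through $P_{i+1}(u)$ and to prove that $P_{i+1}(u)\cup R_i(u)$ is a trap from which $u$ is unreachable, via your observation that $P_{i-1}(u)$ is a one-way conduit from $R_{i+1}(u)$ into $R_i(u)$ in $G_i$. That extra reachability argument is sound and is in fact more explicit than the paper's rather terse claims that the two directions are analogous and that the path cannot lie outside its cycle; the price you pay is the additional confinement analysis that the paper's choice of cycle, which pins the path between the two fans from the outset, avoids.
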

\begin{proof}
If $v \in R_{i-1}(u)$, the two paths $P_i(u)$ and $P_i(v)$ have a vertex in common $w_1$, and the paths $P_{i+1}(u)$ and $P_{i+1}(v)$ have a vertex in common $w_2$. Thus, in $G_i$, there are two oriented paths $\pi_1$ from $w_1$ to $w_2$: one, call it $\pi_1$, containing $u$ and the other one, call it $\pi_2$, containing $v$. We prove now that there is no oriented path from $v$ to $u$ in $G_i$ (the proof that there is no oriented path from $u$ to $v$ is analogous). Suppose that in $G_i$ there is an oriented path $\pi$ from $v$ to $u$. If $\pi$ does not share a vertex with $P_i(u) \cup P_{i+1}(u)$ nor with $P_i(v) \cup P_{i+1}(v)$, then it is completely contained in the (unoriented) cycle $C$ formed by $\pi_1$ and $\pi_2$. Let $e_i$ be the first edge of $P_i(u)$ (this edge is the unique outgoing edge of $u$ colored $i$ in $\T$) and let $e_{i+1}$ be the first edge of $P_{i+1}(u)$ (this edge is the unique outgoing edge of $u$ colored $i+1$ in $\T$). Since the edges of $\pi$ must be colored $i$ or $i+1$, there should be an edge colored $i$ or $i+1$ that follows $e_i$ and precedes $e_{i+1}$ in the counterclockwise order of the edges around $u$. But all edges that follows $e_i$ and precedes $e_{i+1}$ in the counterclockwise order around $u$ are colored $i-1$. Suppose now that $\pi$ shares a vertex $w$ with $P_i(u) \cup P_{i+1}(u)$ and/or $P_i(v) \cup P_{i+1}(v)$. Vertex $w$ cannot belong to path $P_i(u)$ because, in this case, $w$ would precede both $u$ and $v$ according to $\prec_i$ and therefore there would be a directed cycle in $G_i$. Analogously, $w$ cannot be in $P_{i+1}(v)$ because, in this case, $w$ would follow both $u$ and $v$ according to $\prec_i$ and therefore there would be a directed cycle in $G_i$. The only possibility is that $w$ belongs to either the subpath of $P_i(u)$ from $u$ to $w_1$, or to the subpath of $P_{i+1}(v)$ from $v$ to $w_2$ (endpoints excluded). Suppose that $w$ belongs to $P_i(u)$. Let $w'$ and $w''$ be the vertices that precede and follow $w$ along $P_i(u)$, respectively. In $\T$, edge $(w',w)$ is an incoming edge of $w$ colored $i$, while edge $(w,w'')$ is the unique outgoing edge of $w$ colored $i$. As a consequence, path $\pi$ cannot pass through $w$ ``exiting'' from cycle $C$. In order to exit from $C$, in $G_i$ there should be an incoming edge of $w$ that follows $(w',w)$ and precedes $(w,w'')$ in the counterclockwise order around $w$. In $\T$ this edge should be either a incoming edge of $w$ colored $i+1$ or an outgoing edge of $w$ colored $i$, but none of the two is possible according to the properties of the Schnyder realizer. With a symmetric argument one can show that also when $w$ belongs to $P_{i+1}(v)$, path $\pi$ cannot pass through $w$ ``exiting'' from cycle $C$. Thus, in both cases path $\pi$ can only enter cycle $C$. On the other hand, path $\pi$ cannot be completely outside cycle $C$ (in this case it would cross the common subpath of $P_i(u)$ and $P_i(v)$, or the common subpath of $P_{i+1}(u)$ and $P_{i+1}(v)$, but we have already seen that it is not possible). It follows that path $\pi$ enters $u$ from inside $C$, i.e., there is an edge colored $i$ or $i+1$ that follows $e_i$ and precedes $e_{i+1}$ in the counterclockwise order of the edges around $u$, but we have already seen that this is not possible.
\end{proof}

\begin{lemma}\label{le:comparability3}
Let $G$ be a maximal plane graph equipped with a Schnyder realizer $\T$. Let $u$ be an internal vertex of $G$ and let $v$ be another vertex of $G$ that is comparable with $u$ by $\prec_i$. If $v \prec_i u$ then $v \in R_{i+1}(u) \cup P_i(u)$. If $u \prec_i v$ then $v \in R_{i}(u) \cup P_{i+1}(u)$.
\end{lemma}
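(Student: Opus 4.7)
The plan is to prove only the first implication in detail; the second follows by an entirely analogous 6-way case analysis with $T_i$ and $T_{i+1}$ interchanged. I assume $v\prec_i u$, so that there is a directed path from $v$ to $u$ in the DAG $G_i$. The three paths $P_0(u), P_1(u), P_2(u)$ and their three open regions $R_0(u), R_1(u), R_2(u)$ partition $V(G)\setminus\{u\}$ into six disjoint classes. My goal is to rule out that $v$ belongs to any of the four classes $R_{i-1}(u)$, $R_i(u)$, $P_{i-1}(u)\setminus\{u\}$, or $P_{i+1}(u)\setminus\{u\}$, leaving $v\in R_{i+1}(u)\cup P_i(u)$ as the only possibility.

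Three of these classes will be handled easily. First, if $v\in R_{i-1}(u)$, Lemma~\ref{le:comparability2} immediately gives incomparability, contradicting $v\prec_i u$. Next, if $v\in P_{i+1}(u)\setminus\{u\}$, then $v$ is a proper ancestor of $u$ in $T_{i+1}$; since the $(i+1)$-colored edges keep their orientation in $G_i$, the sub-path of $P_{i+1}(u)$ from $u$ up to $v$ is a directed path from $u$ to $v$ in $G_i$, giving $u\prec_i v$ and contradicting the acyclicity of $G_i$. Finally, if $v\in P_{i-1}(u)\setminus\{u\}$, then $u$ lies in the subtree of $v$ in $T_{i-1}$; by the standard Schnyder-realizer property that the subtree of any vertex $w$ in $T_j$ is contained in $R_j(w)\cup\{w\}$, we obtain $u\in R_{i-1}(v)$, and Lemma~\ref{le:comparability2} with $u$ and $v$ interchanged again yields incomparability.

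The remaining case $v\in R_i(u)$ is the main obstacle. My plan is to run the analogous 6-way partition from $v$'s viewpoint to locate $u$. The sub-cases $u\in P_j(v)\setminus\{v\}$ with $j\neq i$ place $v$ in the subtree of $u$ in $T_j$, hence in $R_j(u)$, contradicting $v\in R_i(u)$; the sub-case $u\in P_i(v)\setminus\{v\}$ produces a reversed-$T_i$ directed path from $u$ to $v$ in $G_i$, giving $u\prec_i v$ and a DAG contradiction with $v\prec_i u$; the sub-case $u\in R_{i-1}(v)$ is dispatched by Lemma~\ref{le:comparability2} with $u$ and $v$ interchanged. The two delicate sub-cases are $u\in R_i(v)$ and $u\in R_{i+1}(v)$. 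The first will be excluded by the monotonicity of Schnyder barycentric coordinates, which makes $v\in R_i(u)$ and $u\in R_i(v)$ incompatible. The second, which is the hardest point, I would handle by exhibiting inside $R_i(u)$ an explicit directed path from $u$ to $v$ in $G_i$, built by alternating reversed-$T_i$ arcs with original-$T_{i+1}$ arcs and using the local counterclockwise edge ordering around each internal vertex; this again yields $u\prec_i v$ and the final DAG contradiction.
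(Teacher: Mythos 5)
Your skeleton --- partition $V(G)\setminus\{u\}$ into the three paths and three open regions and rule out four of the six classes --- is the same as the paper's, and your disposal of $R_{i-1}(u)$, $P_{i-1}(u)\setminus\{u\}$ and $P_{i+1}(u)\setminus\{u\}$ matches the paper's argument essentially verbatim. The divergence, and the gap, is in the case $v\in R_i(u)$. There you launch a second six-way partition locating $u$ with respect to $v$, and this funnels all of the difficulty into the sub-case $u\in R_{i+1}(v)$, which you do not actually prove: ``a directed path built by alternating reversed-$T_i$ arcs with original-$T_{i+1}$ arcs using the local counterclockwise edge ordering'' is a plan, not a construction. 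Nothing in the proposal specifies which arcs are taken, why the alternation terminates at $v$, or why the local rotation conditions chain into a global oriented path of $G_i$; and since this sub-case is non-vacuous (the mutual containments $v\in R_i(u)$ and $u\in R_{i+1}(v)$ are perfectly consistent with the monotonicity of region sizes), the lemma is not established.

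The missing idea is simpler than an alternating walk. With $v\in R_i(u)$, follow the single path $P_i(v)$: it starts inside the region bounded by $P_{i-1}(u)$, $P_{i+1}(u)$ and the outer edge $(r_{i-1},r_{i+1})$, and it must reach $r_i$, which lies outside that region, so it meets $P_{i-1}(u)$ or $P_{i+1}(u)$. If it meets $P_{i-1}(u)$, one falls back into the configuration $u\in R_{i-1}(v)$ and Lemma~\ref{le:comparability2} gives incomparability, a contradiction. If it meets $P_{i+1}(u)$ at a vertex $w$, then the concatenation of the reversal of $P_i(u)$, the subpath of $P_{i+1}(u)$ from $u$ to $w$, the reversed subpath of $P_i(v)$ from $w$ to $v$, and $P_{i+1}(v)$ is an oriented path of $G_i$ passing first through $u$ and then through $v$; hence $u\prec_i v$, which together with $v\prec_i u$ contradicts the acyclicity of $G_i$. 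This yields exactly a two-segment path $u\to w\to v$ (one original $(i{+}1)$-colored segment followed by one reversed $i$-colored segment), so no alternation is needed and the position of $u$ relative to $v$ never has to be analyzed; your sub-cases $u\in R_i(v)$ and $u\in R_{i+1}(v)$ simply do not need to be distinguished.
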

\begin{proof}
We consider the case when $v \prec_i u$ (the case when $u \prec_i v$ can be proven symmetrically). We show that $v$ must be in $R_{i+1}(u) \cup P_i(u)$. By Lemma~\ref{le:comparability2}, $v$ cannot be in $R_{i-1}(u)$ nor in $P_{i-1}(u)$ (in this case $u$ would be in $R_{i-1}(v)$). If $v$ is in $P_{i+1}(u)$ then $u$ ad $v$ are comparable with respect to $\prec_i$ but $u \prec_i v$, so $v$ cannot be in $P_{i+1}(u)$. Suppose that $v$ is in $R_{i}(u)$. If $P_{i}(v)$ has a vertex in common with $P_{i-1}(u)$, then we are again in the case when $u$ is in $R_{i-1}(v)$ and thus $u$ and $v$ are not comparable by Lemma~\ref{le:comparability2}. It follows that $P_{i}(v)$ must have a vertex $w$ in common with $P_{i}(u)$. Consider the path $P$ consisting of the concatenation of: (i) the path $P_{i}(u)$; (ii) the subpath of $P_{i+1}(u)$ from $u$ to $w$; (iii) the subpath of $P_{i}(v)$ from $w$ to $v$; (iv) the path $P_{i+1}(v)$. All edges of $P$ colored $i$ are oriented towards $r_i$, while those colored $i+1$ are oriented towards $r_{i+1}$. Thus reversing the edges of $P$ colored $i$, we obtain an oriented path $P'$ that is an oriented path of $G_{i}$ and that contains both $u$ and $v$. Thus, $u$ and $v$ are comparable with respect to $\prec_{i}$. Note however that in this case $u \prec_i v$. Thus $v$ cannot be in $R_{i}(u)$. The only possibility remaining is that $v$ is in $R_{i+1}(u) \cup P_i(u)$.
\end{proof}

In the next lemma we show that any maximal chain of $V(G)$ with respect to any partial order $\prec_{i}$ defines an induced  biconnected outerplanar subgraph with one edge on the external boundary of $G$.

\begin{lemma}\label{le:chain-frontier}
Let $G$ be a maximal plane graph equipped with a Schnyder realizer $\T$. Let $\chi$ be a maximal chain of $V(G)$ with respect to the partial order $\prec_{i}$, $1 \leq i \leq 3$ and let $\pi$ be the oriented path associated with $\chi$. The graph induced by the vertices of $\pi$ is an induced biconnected outerplanar subgraph with one edge on the external boundary of $G$.
\end{lemma}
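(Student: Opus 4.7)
The plan is to close the directed path $\pi$ with the outer edge $(r_i,r_{i+1})$ into a Hamiltonian cycle $C$ of $H:=G[V(\pi)]$, and then derive outerplanarity from the structure of the colors of $G$'s edges between chain vertices. First, by inspection of the construction of $G_i$, the vertex $r_i$ is its unique source and $r_{i+1}$ is its unique sink. Hence any maximal chain of $\prec_i$ must begin at $r_i$ and end at $r_{i+1}$, so $\pi$ is a directed $r_i$-to-$r_{i+1}$ path in $G_i$ visiting every vertex of $\chi$ exactly once. The outer edge $(r_i,r_{i+1})$ lies on the external boundary of $G$ and, by construction, is also an edge of $G_i$; since both its endpoints belong to $\chi$, it is an edge of $H$, fulfilling the last requirement of the statement. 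Together with $\pi$ this edge yields a cycle $C$ passing through every vertex of $H$, so $H$ is $2$-connected.

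For outerplanarity I would proceed in two steps. First, I would rule out chords of $C$ in $G$ colored $i-1$: if $(v_a,v_b)$ were such an edge then, up to renaming, $v_b$ would be a child of $v_a$ in $T_{i-1}$ and would therefore lie in $R_{i-1}(v_a)$, so Lemma~\ref{le:comparability2} would make $v_a$ and $v_b$ $\prec_i$-incomparable, contradicting their common membership in $\chi$. Hence every chord of $C$ in $H$ is an edge of the st-planar DAG $G_i$, directed from its chain-earlier to its chain-later endpoint.

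Second, I would argue that no two such chords interleave on the Hamiltonian cycle $C$. Assume two chords $(v_a,v_b)$ and $(v_c,v_d)$ with $a<c<b<d$. Together with the subpaths of $\pi$ between their endpoints, they create in $G_i$ a topological configuration whose realization in the planar embedding of $G$ would either force an edge crossing (contradicting planarity) or enclose a bounded region whose vertices, via Lemmas~\ref{le:comparability2} and~\ref{le:comparability3}, would be $\prec_i$-comparable to every chain vertex and hence yield a vertex insertable into $\chi$, violating maximality. Ruling out interleaving makes the chord family non-crossing with respect to $C$, so $H$ admits a planar embedding with $C$ as its outer face; all vertices of $H$ then lie on the outer face, proving outerplanarity.

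The main obstacle will be this non-interleaving step. A clean argument will need to combine the circular ordering of colored edges around each chain vertex prescribed by the Schnyder axioms (which determines how the entry and exit of $\pi$ at each $v_j$ partition the $G_i$-edges at $v_j$ into two arcs on the two sides of $\pi$) with the maximality of $\chi$ (which forbids any vertex to lie strictly $\prec_i$-between two consecutive chain elements); I expect the interplay of these two ingredients to be exactly what is needed to rule out interleaving chords.
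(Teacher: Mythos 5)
Your first half is sound and coincides with the paper's: the source/sink argument shows that $\pi$ runs from $r_i$ to $r_{i+1}$, closing it with the outer edge $(r_i,r_{i+1})$ gives a Hamiltonian cycle $C$ of $H$, and this yields both biconnectivity and the required edge on the external boundary. Your preliminary observation that no chord of $C$ can be colored $i-1$ (via Lemma~\ref{le:comparability2}, since the child endpoint of such an edge lies in $R_{i-1}$ of the parent) is also correct, though the paper does not need it and it is not by itself enough.

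The genuine gap is the non-interleaving step, which you yourself flag as the obstacle: you do not prove it, and neither of the two mechanisms you sketch for a contradiction works. Two interleaving chords of $C$ in the plane graph $G$ do \emph{not} force an edge crossing --- in $G$'s embedding one of them simply lies inside $C$ and the other outside --- so the planarity branch gives nothing. The ``enclosed region'' branch is unsubstantiated: the region cut off by a chord may contain no vertices at all, and even when it does you give no reason why such a vertex should be $\prec_i$-comparable to \emph{every} element of $\chi$, which is what violating maximality of $\chi$ would require. The paper closes exactly this hole with a single application of Lemma~\ref{le:comparability3}: a chord $(u,v)$ with $u\prec_i v$ has non-consecutive endpoints on $\pi$, so some $w\in\chi$ satisfies $u\prec_i w\prec_i v$; then $u\in R_{i+1}(w)\cup P_i(w)$ and $v\in R_i(w)\cup P_{i+1}(w)$, and since these two sets are separated by $w$ and the path $P_{i-1}(w)$, the edge $(u,v)$ must lie inside $C$ in the embedding of $G$, as otherwise it would cross $P_{i-1}(w)$. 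This is stronger than non-interleaving (all chords end up on the same side of $C$ in $G$'s own embedding, which is also what the drawing step downstream relies on) and is the one idea missing from your argument; replacing your second step with it completes the proof.
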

\begin{proof}
First of all, notice that since $G_i$ has a single source $r_i$ and single sink $r_{i+1}$, then every maximal chain of $V(G)$ with respect to $\prec_{i}$ has $r_i$ as the first element and $r_{i+1}$ as the last element. As a consequence $\pi$ starts at $r_i$ and ends at $r_{i+1}$. Also, since $\chi$ contains at least one internal vertex, $\pi$ consists of all internal edges. This implies that it does not coincide with the single edge $(r_i,r_{i+1})$. Thus, the concatenation of (the underlying undirected path of) $\pi$ and of the edge $(r_i,r_{i+1})$ is a simple cycle $C$ of $G$. We prove that the subgraph $H$ induced by the vertices in $C$ is a biconnected outerplanar graph. If $C$ is the boundary of the external face, then $H$ is a $3$-cycle and therefore it is a biconnected outerplanar graph. If $C$ is not the boundary of the external face of $G$, in order to prove that is a biconnected outerplanar graph, it is sufficient to show that all edges of $H$ that are not in $C$ are inside $C$ in the planar embedding of $G$. Consider an edge $(u,v)$ of $H$ not in $C$. Vertices $u$ and $v$ must be two non-consecutive vertices of $\pi$. Without loss of generality assume $u \prec_i v$, i.e., $u$ is encountered first when walking along $\pi$ from $r_i$ to $r_{i+1}$. Since we are considering an edge not in $C$, it must be $u \neq r_i$ or $v \neq r_{i+1}$ (or both). Since $u$ and $v$ are non-consecutive, there must be a vertex $w$ such that $u \prec_i w \prec_i v$. By Lemma~\ref{le:comparability3}, $u$ is in $R_{i+1}(w) \cup P_i(w)$ and $v$ is in $R_{i}(w) \cup P_{i+1}(w)$. As a consequence, if edge $(u,v)$ is not inside $C$, then it crosses path $P_{i-1}(w)$, which is not possible.
\end{proof}

The next lemma shows that the subgraph of Lemma~\ref{le:chain-frontier} has at least $\lceil \sqrt[3]{n} \rceil$ vertices.

\begin{lemma}\label{le:chain}
Let $G$ be a maximal plane graph equipped with a Schnyder realizer $\T$. $V(G)$ has a chain of size at least $\lceil \sqrt[3]{n} \rceil$ with respect to one of the three partial orders $\prec_{1}$, $\prec_{2}$, and $\prec_{3}$.
\end{lemma}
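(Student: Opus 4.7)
My plan is to combine Mirsky's theorem~\cite{M-71} with the comparability guarantee of Lemma~\ref{le:comparability}. For each $i\in\{1,2,3\}$ let $c_i$ denote the size of a longest chain of $(V(G),\prec_i)$; by Mirsky's theorem, $V(G)$ admits a partition $\mathcal{A}_i$ into exactly $c_i$ antichains of $\prec_i$. I will then take the common refinement $\mathcal{A}_1\wedge\mathcal{A}_2\wedge\mathcal{A}_3$, which partitions $V(G)$ into at most $c_1c_2c_3$ blocks, each of which is simultaneously an antichain with respect to all three orders.

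The heart of the argument is to show that every block of this refinement is a singleton, so that $n\leq c_1c_2c_3$. Lemma~\ref{le:comparability} already takes care of pairs of internal vertices. For the remaining pairs I would revisit the construction of $G_i$: the outer vertex $r_i$ is the unique source of $G_i$, $r_{i+1}$ is its unique sink, and the three outer vertices are totally ordered as $r_i\prec_i r_{i-1}\prec_i r_{i+1}$ via the three recolored outer edges. Consequently $r_i\prec_i v$ and $v\prec_i r_{i+1}$ hold for every other vertex $v$, and so each outer vertex is comparable with every other vertex of $G$ in at least two of the three partial orders.

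Once this extended comparability is in hand, no two distinct vertices can share a block of the common refinement, yielding $n\leq c_1c_2c_3$. Since $\lceil\sqrt[3]{n}\,\rceil-1<\sqrt[3]{n}$, if we had $c_i\leq\lceil\sqrt[3]{n}\,\rceil-1$ for every $i$ then $c_1c_2c_3<n$ would contradict this bound. Hence some $c_i$ is at least $\lceil\sqrt[3]{n}\,\rceil$, and a corresponding longest chain in $(V(G),\prec_i)$ provides the desired subset.

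The main obstacle I anticipate is the promotion of comparability from internal vertices to the three outer vertices, since Lemma~\ref{le:comparability} is stated only for internal pairs, and the behaviour of $r_{i-1}$ inside $G_i$ is slightly awkward (its only incident edges in $G_i$ are the two outer edges to $r_i$ and $r_{i+1}$, so it is not comparable with the internal vertices of $G$ in $\prec_i$). Once this short technical verification is carried out, the remainder is a clean pigeonhole computation on the product of the three antichain partition sizes.
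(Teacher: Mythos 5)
Your proposal is correct and is essentially the paper's own argument: both rest on Mirsky's theorem together with Lemma~\ref{le:comparability} and a pigeonhole bound of the form $n\leq c_1c_2c_3$, the only difference being that you take a common refinement of three antichain partitions at once while the paper iterates (largest $\prec_1$-antichain of size $\geq n^{2/3}$, then largest $\prec_2$-antichain of that of size $\geq n^{1/3}$, which must be a $\prec_3$-chain). Your explicit verification that the three outer vertices are comparable with everything in at least two of the orders is a welcome extra care, since Lemma~\ref{le:comparability} is stated only for internal vertices and the paper applies it to all of $V(G)$ without comment.
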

\begin{proof}
Notice that, in order to prove that $V(G)$ has a chain of size at least $\lceil \sqrt[3]{n} \rceil$, it is sufficient to show that $V(G)$ has a chain of size at least $\sqrt[3]{n}$ because the number of vertices in a chain is necessarily an integer number. Consider first the partial order $\prec_{1}$. If $V(G)$ has a chain of size at least $\sqrt[3]{n}$ with respect to $\prec_{1}$, then the statement is true. If not, by Mirsky's theorem~\cite{M-71} (the dual of Dilworth's theorem), $V(G)$ has a partition into at most $\sqrt[3]{n}$ antichains with respect to $\prec_{1}$. Hence, one of this antichains must have at least $\frac{n}{\sqrt[3]{n}}=n^{2/3}$ vertices. Let $U$ be such a set of vertices. Since the vertices in $U$ form an antichain of $V(G)$ with respect to $\prec_{1}$, then, by Lemma~\ref{le:comparability}, any two vertices in $U$ must be comparable by $\prec_{2}$ or by $\prec_{3}$. Consider $\prec_{2}$ restricted to the set $U$; if $U$ has a chain of size at least $\sqrt[3]{n}$ with respect to $\prec_{2}$, then the statement is true. Otherwise, applying again Mirsky's theorem, $U$ can be partitioned into at most $\sqrt[3]{n}$ antichains. One of this antichain must have at least $\frac{n^{2/3}}{n^{1/3}}=\sqrt[3]{n}$ vertices. Let $U'$ be such a set of vertices. The vertices in $U'$ are not comparable with respect to $\prec_{1}$, neither with respect to $\prec_{2}$. Thus by Lemma~\ref{le:comparability} they must be comparable by $\prec_{3}$, i.e., they form a chain of $U'$ with respect to $\prec_{3}$.
\end{proof}

\begin{theorem}\label{th:large-induced-outerplanar}
Every $n$-vertex maximal plane graph $G$ contains an induced biconnected outerplanar subgraph with at least $\lceil \sqrt[3]{n} \rceil$ vertices and one edge on the external boundary of $G$. Such a subgraph can be computed in $O(n)$-time.
\end{theorem}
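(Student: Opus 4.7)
The plan is to combine Lemma~\ref{le:chain-frontier} and Lemma~\ref{le:chain} directly: the first shows that any maximal chain with respect to one of the three partial orders $\prec_1, \prec_2, \prec_3$ induces a biconnected outerplanar subgraph with an edge on the external boundary of $G$, and the second shows that at least one of these partial orders admits a chain of size $\lceil \sqrt[3]{n} \rceil$. So the existence part is essentially immediate; what remains is to argue the $O(n)$ running time.

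First, I would invoke the linear-time algorithm of Schnyder~\cite{Schnyder90} to compute a realizer $\T = \{T_0, T_1, T_2\}$ of $G$. From $\T$, each of the three directed graphs $G_1, G_2, G_3$ defined in Section~\ref{ss:convex-induced} can be constructed in $O(n)$ time by reorienting/recoloring a constant number of edges per vertex; note that each $G_i$ has $O(n)$ vertices and edges since $G$ is planar.

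Next, since each $G_i$ is a DAG with a unique source and a unique sink, a longest directed path in $G_i$ can be found in $O(n)$ time by topologically sorting $G_i$ and running the standard dynamic program that, for each vertex $v$, stores the length of a longest directed path ending at $v$. Let $\pi_i$ be the path so produced for $G_i$, and let $\pi$ be the longest among $\pi_1, \pi_2, \pi_3$. By Lemma~\ref{le:chain}, at least one of the partial orders $\prec_i$ admits a chain of size $\lceil \sqrt[3]{n} \rceil$, and since chains of $\prec_i$ are exactly vertex sets of oriented paths of $G_i$, the length of $\pi$ is at least $\lceil \sqrt[3]{n} \rceil$.

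Finally, by Lemma~\ref{le:chain-frontier}, the subgraph $H$ of $G$ induced by the vertices of $\pi$ is biconnected outerplanar and contains the outer edge $(r_i, r_{i+1})$ on the external boundary of $G$, completing the proof. The main (and essentially only) obstacle is the algorithmic bookkeeping in the last steps: we must ensure that extracting the induced subgraph $H$ from the vertex set of $\pi$ also takes $O(n)$ time, but this is immediate since we can mark the vertices of $\pi$ and then scan the adjacency lists of $G$ once to collect the induced edges.
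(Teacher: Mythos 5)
Your proposal is correct and follows essentially the same route as the paper: both combine Lemma~\ref{le:chain} (existence of a chain of size $\lceil \sqrt[3]{n} \rceil$) with Lemma~\ref{le:chain-frontier} (a maximal chain induces the desired subgraph), and both obtain the $O(n)$ bound by computing the Schnyder realizer, building $G_1, G_2, G_3$, and finding a longest directed path in each DAG. Your added detail on the topological-sort dynamic program and on extracting the induced subgraph is just a more explicit version of what the paper states.
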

\begin{proof}
Consider a Schnyder realizer $\T$ of $G$ and define the three partial orders $\prec_{1}$, $\prec_{2}$, and $\prec_{3}$. By Lemma~\ref{le:chain}, $V(G)$ has a chain of size at least $\lceil \sqrt[3]{n} \rceil$ with respect to one of these partial orders. From Lemma~\ref{le:chain-frontier} it follows that $G$ contains an induced biconnected outerplanar subgraph $H$ with at least $\lceil \sqrt[3]{n} \rceil$ vertices and one edge on the external boundary of $G$.

The Schnyder realizer $\T$ can be computed in $O(n)$ time~\cite{Schnyder90}. Using $\T$, the three graphs $G_1$, $G_2$, and $G_3$ can be also computed in $O(n)$ time. Computing the longest path in each of these graphs (which can be done in $O(n)$ time) we find the longest chain of $V(G)$ with respect to one of the three partial orders $\prec_{1}$, $\prec_{2}$, and $\prec_{3}$. The subgraph induced by the vertices in this path is $H$.
\end{proof}

\paragraph{Proof of Theorem~\ref{th:convex}} Let $S$ be a one-sided convex point set of cardinality at most $\lceil \sqrt[3]{n} \rceil$ and let $G$ be an $n$-vertex plane graph. If $G$ is not maximal we can add dummy edges to it in order to make it maximal (these edges will be removed at end from the obtained drawing). Thus, without loss of generality, assume that $G$ is maximal. By Theorem~\ref{th:large-induced-outerplanar} $G$ contains an induced biconnected outerplanar subgraph with at least $\lceil \sqrt[3]{n} \rceil$ vertices and one edge on the external boundary of $G$. By Lemma~\ref{le:embedding.2}, $G$ admits a geometric point-subset embedding on $S$.

By Theorem~\ref{th:large-induced-outerplanar} and Lemma~\ref{le:embedding.2} the time complexity of the drawing algorithm is $O(k \log k + n)$ where $k$ is the size of $S$. Since $k \leq \lceil \sqrt[3]{n} \rceil$, we have that $O(k \log k)=O(n)$ which concludes the proof of Theorem~\ref{th:convex}.

\medskip


Theorem~\ref{th:convex} can be used to solve a different problem,
called the \emph{allocation problem}. In the allocation
problem~\cite{rv-cssld-11} the input is an $n$ vertex planar graph
$G$ and a point set $X$ of cardinality $n$, and the goal is to
compute a planar straight-line drawing of $G$ such that as many
vertices as possible are represented by points of $X$. The number of points fitting this purpose is denoted as $fit^X(G)$ and $fit(G)= \min_X\{fit^{X}(G)\}$, where the minimum is taken over all $n$-point sets $X$. Upper and lower bounds on
$fit(G)$ are known for various classes of
graphs~\cite{BDHLMW-09,DBLP:journals/dcg/GoaocKOSSW09,rv-cssld-11}. We define $fit^C(G)= \min_X\{fit^{X}(G)\}$ where the minimum is taken over all $X$ in one-sided convex position. Theorem~\ref{th:convex} implies that $fit^C(G) \geq \lceil \sqrt[3]{n}
\rceil$ for every planar graph $G$. This improves the
best known lower bound on $fit(G)$ for general planar graphs which is
$\sqrt[4]{\frac{n}{3}}$~\cite{BDHLMW-09}.

\begin{corollary}
For every $n$-vertex planar graph, $fit^{C}(G) \geq  \lceil \sqrt[3]{n} \rceil$.
\end{corollary}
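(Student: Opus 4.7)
The plan is to derive the corollary as an immediate consequence of Theorem~\ref{th:convex}. Fix an arbitrary one-sided convex point set $X$ with $|X| = n$; by definition of $fit^C(G)$, it suffices to produce a planar straight-line drawing of $G$ in which at least $\lceil \sqrt[3]{n} \rceil$ vertices are placed at points of $X$.

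First I would rotate $X$ so that its leftmost point $\ell$ and its rightmost point $r$ are adjacent along the convex hull of $X$; this is possible by the very definition of a one-sided convex point set. In this orientation $\ell r$ is an edge of $\mathrm{conv}(X)$, so every other point of $X$ lies strictly in one of the open half-planes bounded by the line through $\ell$ and $r$.

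Next I would select any subset $S \subseteq X$ with $|S| = \lceil \sqrt[3]{n} \rceil$ that contains both $\ell$ and $r$ (for $n = 1$ the claim is trivial, and for $n \geq 2$ such a subset clearly exists because $n \geq \lceil \sqrt[3]{n} \rceil$). Being a subset of a convex-position set, $S$ is itself in convex position; because $\ell, r \in S$, they remain the leftmost and rightmost points of $S$; and every remaining point of $S$ still lies on the same open side of the line through $\ell r$. Hence $\ell r$ is also an edge of $\mathrm{conv}(S)$, so $\ell$ and $r$ are adjacent on that convex hull, meaning $S$ is a one-sided convex point set.

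Finally, since $|S| = \lceil \sqrt[3]{n} \rceil$ meets the hypothesis of Theorem~\ref{th:convex}, $G$ admits a geometric point-subset embedding on $S$. This is a planar straight-line drawing of $G$ in which $|S|$ vertices coincide with points of $S \subseteq X$, yielding $fit^X(G) \geq \lceil \sqrt[3]{n} \rceil$. As $X$ was an arbitrary one-sided convex point set of size $n$, we obtain $fit^C(G) \geq \lceil \sqrt[3]{n} \rceil$. There is essentially no combinatorial obstacle; the only tiny subtlety is checking that one-sided convexity is inherited by the chosen subset, which is what the trick of forcing $\ell$ and $r$ into $S$ is designed to handle.
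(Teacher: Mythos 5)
Your proposal is correct and follows exactly the route the paper intends: the corollary is stated as an immediate consequence of Theorem~\ref{th:convex}, obtained by restricting an arbitrary one-sided convex $n$-point set $X$ to a subset $S$ of size $\lceil \sqrt[3]{n} \rceil$ and applying the theorem to $S$. Your additional check that one-sided convexity is inherited by the chosen subset is a legitimate (if easy) detail that the paper leaves implicit.
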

\section{Conclusions and Open Problems}\label{se:conclusions}

In this paper we proved that a planar graph with $n$ vertices admits
a geometric point-subset embedding on every set of $\left \lceil \frac{\sqrt{\log_2 n}-1}{4}
\right \rceil$ points and on every one-sided convex point set of $\left \lceil \sqrt[3]{n} \right \rceil$  points on the plane.
These are the first non-constant lower bounds for the size of arbitrary (one-sided convex) point sets on which planar graphs have geometric point-subsets embedding.
In order to achieve this result we proved that every maximal planar graph contains as a subgraph an induced outerpath of size at least $\left \lceil \frac{\sqrt{\log_2 n}-1}{4} \right \rceil$ and
an induced biconnected outerplanar graph of size $\left \lceil \sqrt[3]{n} \right \rceil$. To the best of our knowledge, these are the first lower bounds for the size of a largest induced outerpath and biconnected outerplanar subgraph in
planar graphs.
It is known that not all planar graphs with $n$ vertices admit a geometric point-subset embedding on a one-sided convex set containing more
than $2 \left \lceil \frac{n}{3} \right \rceil + 2$ points~\cite{DBLP:conf/isaac/Angelini12}. Thus closing the gap between lower bounds $\left \lceil \frac{\sqrt{\log_2 n}-1}{4}
\right \rceil$ and $\left \lceil \sqrt[3]{n} \right \rceil$,  and  upper bound $2 \left \lceil \frac{n}{3} \right \rceil + 2$ is an interesting open problem.
From the graph-theoretical point of view it would be also interesting to study the upper bounds for the size of maximum induced biconnected outerplanar subgraph and induced outerpath in biconnected planar graphs. However, it is worth mentioning that an upper bound for the size of induced outerplanar subgraph does not provide us an upper bound for the size of convex point set that represent universal subset for planar graphs.

\clearpage

\end{document}